\newtheorem{proposition}{Proposition}
\newtheorem{lemma}{Lemma}
\newtheorem{corollary}{Corollary}
\begin{document}
\title{\mbox{}\vspace{1.00cm}\\
\textsc{Capacity and Delay Tradeoff of Secondary Cellular Networks with Spectrum Aggregation}}
\author{\normalsize
Lingyu Chen$^1$, Chen Liu$^1$, Xuemin Hong$^1$, Cheng-Xiang Wang$^2$ \textit{Fellow, IEEE}, John Thompson$^3$, \textit{Fellow, IEEE}, and Jianghong Shi$^1$ \\
\vspace{1cm}
$^1$Department of Communications Engineering, \\
School of Information Science and Technology, \\
Xiamen University, Xiamen 361005, Fujian, P.R.China.\\
Email: \{chenly, chenl, xuemin.hong, shijh\}@xmu.edu.cn\\
\vspace{0.6cm}
$^2$Institute of Sensors, Signals and Systems, \\
School of Engineering and Physical Sciences, \\
Heriot-Watt University, Edinburgh, EH14 4AS, UK.\\
Email: cheng-xiang.wang@hw.ac.uk\\
\vspace{0.6cm}
$^3$Institute for Digital Communications,\\
School of Engineering,\\
University of Edinburgh, Edinburgh, EH9 3JL, UK.\\
Email: john.thompson@ed.ac.uk\\
\vspace{0.5cm}
\thanks{Submitted to IEEE Transaction on Wireless Communications.}
}
\maketitle
\newpage
\setcounter{page}{1}

\begin{abstract}


Cellular communication networks are plagued with redundant capacity, which results in low utilization and cost-effectiveness of network capital investments. The redundant capacity can be exploited to deliver secondary traffic that is ultra-elastic and delay-tolerant. In this paper, we propose an analytical framework to study the capacity-delay tradeoff of elastic/secondary traffic in large scale cellular networks with spectrum aggregation. Our framework integrates stochastic geometry and queueing theory models and gives analytical insights into the capacity-delay performance in the interference limited regime. Closed-form results are obtained to characterize the mean delay and delay distribution as functions of per user throughput capacity. The impacts of spectrum aggregation, user and base station (BS) densities, traffic session payload, and primary traffic dynamics on the capacity-delay tradeoff relationship are investigated. The fundamental capacity limit is derived and its scaling behavior is revealed. Our analysis shows the feasibility of providing secondary communication services over cellular networks and highlights some critical design issues.

\end{abstract}


\section{Introduction}
The capacity of a cellular radio access network (RAN) is fundamentally limited by the density of base stations (BSs), system bandwidth, and spectrum efficiency. Once a particular network is rolled out, its maximum capacity is relatively stable. The traffic load, on the other hand, changes dynamically across space and time. Because the capacity of a cellular network is planned to accommodate peak traffic demand, redundant capacity is unavoidable due to traffic fluctuations. Measurements campaigns (e.g., \cite{Mea14}) have shown that redundant capacity is a pervasive problem, which results in low utilization and cost-effectiveness of network capital investments.

Measurement also revealed that the major cause of mobile traffic is multi-media consumption\cite{Cisco14}, which includes different types of communication services. The first type is streaming services that are delay-sensitive but loss-tolerant. Typical applications include voice over IP and video conferencing. The second type is elastic traffic services that are delay-tolerant but loss-sensitive. Typical applications include web browsing and file transfer. In practice, there is no crucial difference in the delay constraints of streaming and elastic traffic. However, the emergence of new applications such as proactive caching \cite{ SZhou15, XWang14, NN12} brings a third type of traffic that has crucial difference from the first two types. Proactive caching systems are able to push content and cache them closer to end users, taking advantage of the fact that content demand is predictable and large cache space is becoming affordable. The traffic generated by proactive caching is called ultra-elastic traffic as the delay constraint is very relaxed (because user request happens much later) and the traffic demand is very flexible (because caching is transparent and opportunistic). It has been envisioned that the redundant capacity in cellular networks can be exploited to deliver ultra-elastic traffic as secondary traffic, which coexists with other higher priority traffic in the same cellular network \cite{Hong16}. Such a background motivates our study in this paper to investigate the fundamental capacity-delay tradeoff of secondary traffic in cellular networks.

Capacity is a key performance metric for cellular networks. It is widely believed that spectrum aggregation and cell densification will become prominent features of future cellular networks \cite{WangCX1, MaoGQ1}, leading to a heterogeneous cellular network (HCN) across multiple separated spectrum bands \cite{HSJ12}. The capacity and coverage performance of HCN has been studied extensively in the literature \cite{ HSJ12, HSD12, WCC12, MDR13, AG13, YS13, XL13, AK14, ZZ14, GN14, ZXM14, HSD14, ND15} using stochastic geometry. However, most of these studies focus on the physical layer capacity (i.e., spectrum efficiency), which differs from the throughput capacity considered in this paper. The throughput capacity of a user is evaluated with respect to the traffic queueing process of the user, and is defined as the average traffic arrival density (bits/s) that can be accommodated with finite delay. Throughput capacity addresses the higher layer aspects of traffic queueing and is better suited for cross-layer and delay related studies.

Delay is another key performance metric for analyzing cellular network traffic. Traffic behaviors can be modeled at the packet level or the session/flow level. Packet level dynamics are notoriously complicated as the temporal statistics exhibit self-similarity and multi-fractal behavior \cite{AF99}. On the contrary, session level behaviors justify the convenient assumption of Poisson arrival and can better reflect how the traffic performance is perceived by end users \cite{SB01}. Multiple modeling frameworks have been used for delay analysis in cellular systems with spectrum aggregation. The classic framework is queuing theory, which is theoretically mature but inflexible \cite{ QL05, LL06}. Discrete/continuous time Markov chain models \cite{ LJ12, VT12, JW13, VT11, MR09, NS15, SL14} are more flexible for describing some practical aspects, but fall short in providing closed-form analytical insights. The concept of local delay was proposed in \cite{MH13, ZG13}, but it focuses on the average delay and reveals no information about the delay distribution. In this paper, we adapt queueing theory as the framework to study the session level behavior of secondary traffic.

It is well-known that there is a fundamental tradeoff between capacity and delay. For mobile ad hoc networks, the capacity-delay tradeoff has been studied extensively using the framework of scaling law analysis \cite{ NeeM05, GamA06, GamA062, Li09, MaoGQ2}. The methodology and results therein, however, is not applicable to cellular networks. The capacity-delay tradeoff in cellular networks is still under-investigated due to the lack of a well-established analytical framework. Some early attempts \cite{AJ12, IS12, AF13} use interference approximation techniques to bound the session level performance of multi-cell networks. However, these works are not compatible with popular stochastic geometry models and only provide loose bounds for the estimation of mean delays. A framework of timely throughput was proposed in \cite{IH09} and recently adopted for the analysis of HCN in \cite{SL13,GZ16}. It assumes that a queuing packet will be dropped if the packet passes a critical delay. This is a useful framework that offers analytical tractability and flexibility in taking into account aspects such as mobility, user association bias, etc. However, this framework is better suited for the study of streaming traffic instead of elastic traffic, and does not provide a detailed characterization of delay distribution. Recent attempts to integrate stochastic geometry and queueing models was reported in \cite{BBa15, Hong1, Hong2}. In \cite{BBa15}, the spatial-temporal dependence of a cellular system is captured by some cell-load equations and eventually resolved via static simulations. Although this framework is mathematically rigorous, it lacks the analytical tractability to reveal closed-form insights. In our previous work \cite{Hong1, Hong2}, stochastic geometric and queueing models are combined to study the uplink capacity of hybrid ad-hoc networks with user collaboration. However, these work focused on a different type of network and did not fully address the issue of multi-user access, which is a critical feature of cellular systems. Moreover, \cite{Hong1, Hong2} did not explain how spatial-temporal dependencies in cellular networks can be decoupled to offer analytical insights. To our best knowledge, full integration of stochastic geometry and queueing theory in cellular networks is an unsolved and challenging problem due to complex coupling of network behavior in spatial and temporal domains \cite{BBa15}.

This paper proposes a new framework for the study of capacity-delay tradeoff of secondary/elastic traffic in cellular networks. Our framework integrates stochastic geometry and priority queuing models to offer analytical tractability and the ability to pinpoint delay distributions. The analytical tractability of our framework comes from two aspects. First, instead of trying to work out the exact mapping between spatial and temporal domains, our methodology is to identify a set of critical parameters in both domains and establish the relationships among their first-moment measures (mean values). Second, by focusing on the secondary traffic, we are able to justify certain assumptions and approximations. Our framework is shown to be useful in revealing some analytical insights. Specifically, this paper makes the following contributions.
\begin{itemize}
  \item Analytical results are derived to characterize the mean delay and delay distribution as functions of per user throughput capacity.
  \item Analytical results are derived to characterize the fundamental capacity limit in some special cases.
  \item A concise analytical approximation is obtained to describe how the per user capacity scales with user-BS density ratio.
\end{itemize}

The remainder of this paper is organized as follows. Section II describes the system model. The overall methodology and some useful approximations are introduced in Section III. The capacity-delay tradeoff and fundamental capacity limit are studied in Sections IV and V, respectively. Section VI provides numerical results and discussions. Finally, conclusions are drawn in Section VII.

\section{System Model}
\subsection{Secondary access protocol}
We consider the downlink of a large scale cellular network that aggregates $N$ independent frequency bands. BSs operating in the same band are assumed to have homogeneous bandwidth and transmit power denoted by $W_n$ and $P_n$, respectively, where $n$ ($1 \leq n \leq N$) is the index of bands. A user can operate in one band at a time, but can handover between different bands. Secondary users are assumed to comply with the following access protocol.
\begin{itemize}
  \item Step 1: Periodically check the buffer of secondary traffic. If the buffer is empty, remain in idle mode. Otherwise turn into active mode and proceed to Step 2.
  \item Step 2: Randomly select one band and associate with the nearest BS operating in the chosen band. This implies a Poisson-Voronoi cell model, which is a widely used cellular network model.
  \item Step 3: Evaluate whether the associated BS is vacant (i.e., not occupied by primary traffic) and available for secondary services. If yes, proceed to Step 4, otherwise return to Step 2. The probability that a typical BS in the $n$th band is vacant is called ``vacant probability'' and is denoted by $\Omega_n$. This parameter indicates the average load of primary traffic.
  \item Step 4: Evaluate the link quality with respect to the associated BS. If the signal-to-noise-and-interference ratio (SINR) is large enough to support a transmission rate of $R$ bits/s, proceed to Step 5, otherwise return to Step 2. Here, $R$ is the minimum rate requirement of secondary transmission. Such a requirement is imposed because it is desirable to restrict secondary services only to users with high quality links, otherwise secondary services may become inefficient and will result in excessive energy consumption and interference. The probability that a typical user in the $n$th band has good link quality is called ``coverage probability'' and denoted by $p_{n,v}$.
  \item Step 5: Compete with other in-coverage users for multiple access to the same BS. We assume a time-division multiple access (TDMA) scheme for multi-user access, where a band is fully allocated to one user at a time and multiple contending users have equal opportunities to access the band through time sharing. If contention is successful, proceed to Step 6, otherwise return to Step 2. The probability that a in-coverage secondary user in the $n$th tier is granted access is called ``access probability'' and denoted by $p_{n, a}$.
  \item Step 6: Transmit secondary traffic with a fixed rate $R$ until the buffer is empty. If the buffer is empty, proceed to Step 1. Otherwise if an outage (caused by primary traffic interruption or coverage outage) occurs during transmission, return to Step 2. We note that the transmission rate of secondary service is fixed to $R$ for simplicity.
\end{itemize}
For a user to receive secondary service in the $n$th band, he should firstly be associated with a vacant BS, secondly have a good coverage, and finally be granted access after multi-user contention. It follows that the service probability $\varepsilon_n$ is the product of vacant probability $\Omega_n$, coverage probability $p_{n,v}$, and access probability $p_{n,a}$, i.e.,
\begin{equation}
\varepsilon_n = \Omega_n \cdot p_{n,v} \cdot p_{n,a}. \label{epin}
\end{equation}

\subsection{Spatial interference model}
The spatial layout of BSs operating in the $n$th band is modeled by a stationary Poisson Point Process (PPP) in $\mathbf{R}^2$ with intensity $\lambda_{b,n}$, which is a commonly used model in the literature. For analytical tractability, we ignore the case of co-located BSs and assume that the spatial layout of BSs in different bands are independent. The spatial distribution of secondary users are also assumed to follow a stationary PPP in $\mathbf{R}^2$ with intensity $\lambda_{u}$. Let us consider a typical user in the $n$th band, the downlink SINR is a random variable, whose cumulative density function (CDF) has been derived for different types of fading channels \cite{And11}. For purposes of clarity and tractability, we consider a representative case in which the path loss exponent is 4. The complementary CDF of the user SINR is then given by \cite{And11}
\begin{equation}
F_{\gamma,n}(x) = \frac{\pi^{\frac{3}{2}} \lambda_{b,n}}{\sqrt{x / P_n}}e^{\frac{a^{2}}{\sqrt{2b}}}\mathrm{Q}\left( \frac{a}{\sqrt{2 x/P_n}}\right)  \label{CCDFapp}
\end{equation}
where $\mathrm{Q}(\cdot)$ denotes the $Q$-function and
\begin{equation}
a = \lambda_{b,n} \pi \left[ 1+\sqrt{x} \arctan(\sqrt{x}) \right].
\end{equation}
If the system is interference limited, which implies that $P_n$ is sufficiently large and the noise is negligible, (\ref{CCDFapp}) can be further simplified to \cite{And11}
\begin{equation}
F_{\gamma}^{\lim}(x) = \frac{1}{1 + \sqrt{x} \arctan(\sqrt{x}) }.  \label{CCDFlim}
\end{equation}

According to the secondary access protocol, a user is in coverage of secondary services if $W_n \log_2(1+\gamma_n) \geq R$, where $\gamma_n$ denotes the SINR perceived by the user. The coverage probability in the $n$th band is therefore given by
\begin{equation}
p_{n,v} = F_{\gamma, n}(2^{R/W_n}-1). \label{pnv}
\end{equation}


\begin{figure}[t]
\vspace{0.125in}
\centerline{\includegraphics[width=11cm,draft=false]{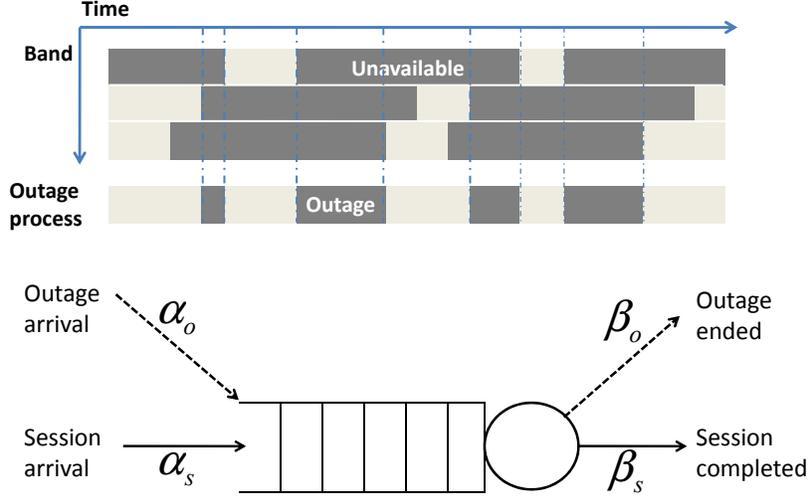}}
\caption{\small Priority queuing model of a typical user with secondary traffic and random outage.}
\end{figure}

\subsection{Temporal queuing model}
As illustrated in Fig.~1, we model the secondary traffic dynamic as a preemptive priority queue, where the transmission of secondary traffic may be preempted (i.e., immediately interrupted) by outages. An outage can be caused by multiple factors such as primary traffic interruption, bad coverage, and failure in multi-user contention. We assume users can handover between bands with negligible time, hence an outage only occurs when no band is available for secondary services. We propose to model the composite outage effect as a stream of higher priority traffic in the priority queue. The arrival of outage events follows a Poisson process with mean interval $\bar{\alpha}_o$. Each outage event contributes to an additive random outage duration denoted by $\beta_o$, the mean of which is $\bar{\beta_o}$. Let us define
\begin{equation}
\rho_o = \bar{\beta_o}/\bar{\alpha_o}.
\end{equation}
This parameter represents the fraction of time that a user is in outage and cannot be served by a BS in all bands. It is worth noting that we do not make any particular assumption on the distribution of $\beta_o$, i.e., it can follow an arbitrary form of continuous distribution. This gives our model the flexibility to represent a wide range of outage phenomenons.

We consider the secondary traffic behavior at the session level. Users are assumed to have homogeneous incoming traffic of sessions that follow i.i.d. Poisson arrival process with mean interval $\bar{\alpha}_s$. Each session carries a file of random size $L$ to be delivered from the BS to the user. The file size $L$ follows a general distribution with mean $\bar{L}$. The mean throughput capacity of a user is given by
\begin{equation}
C=\bar{L}/\bar{\alpha}_s.
\end{equation}
Under the assumption of constant transmission rate $R$, the transmission time of a session is a random variable $\beta_s = L/R$. Let us define
\begin{equation}
\rho_s = \bar{\beta}_s/\bar{\alpha}_s = \bar{L}/(R \bar{\alpha}_s) = C/R.
\end{equation}
This parameter represents the fraction of time that a user receives transmission from a BS. The file size $L$ is assumed to follow a general distribution.

The transmission of a secondary session is forced to stop immediately once an outage occurs. Once the secondary service is available again, a session may adopt a `resume' policy to transmit from where it stopped, or adopt a `repeat' policy to retransmit from the beginning. Our paper is restricted to the resume policy, noting that an extension to the repeat policy is straightforward. Based on the above modeling assumptions, the queuing process at a typical secondary user can be readily captured by a M/G/1 two-level priority queuing model with a preemptive resume policy \cite{Queuebook}. Such a classic queuing model is fully characterized by the four random variables shown in Fig.~1. 

\section{Methodology and Approximations}
Our system model describes a large scale, dynamic system in the spatial and temporal domains. These two domains are inherently coupled and correlated. Analysis encompassing both domains requires integration of stochastic geometry and queueing models, which is an extremely challenging task. Existing work resorted to static simulation to yield results without revealing much theoretical insight \cite{BBa15}. In this paper, instead of trying to capture the detailed relationships between the spatial and temporal domains, we propose a methodology that connects these two domains by establishing analytical relationships among the first-order statistic measure (i.e., mean values) of some critical parameters. Higher order statistics are not our primary concern, therefore we use the flexible model of M/G/1 queue (with general distributions) to offer sufficient flexibility to represent a wide range of higher order statistics. This section will first explain our overall methodology and then introduce some useful approximations as preliminaries.

\subsection{Overall methodology}
Fig.~2 illustrates our overall approach to address the connections between spatial and temporal domains. Our analysis implies two underlying assumptions. First, the queueing processes of users are assumed to be independent and homogeneous. This assumption is reasonable because in the macro time-scale, users are assumed to have independent mobility traces; while in the micro time-scale, users are allowed to hop randomly between independent bands. The composite effects of random mobility and band selection renders the queueing process of a user to be independent of others in the long term. In this case, we can consider a typical user with a typical queueing process, at a typical location and associated with a typical BS. A typical user can be understood as an arbitrary user or a randomly selected user. A probability space can also be defined for the typical user for its states of queueing and signal reception, etc. The second assumption is that all BSs constantly transmit with power $P_n$. This assumption decouples the interference statistics with user behavior and represents the worst-case interfering scenario. It is reasonable because the combined load of primary and secondary traffic from multiple users is likely to keep BSs in constant transmission.

\begin{figure}[t]
\vspace{0.125in}
\centerline{\includegraphics[width=11cm,draft=false]{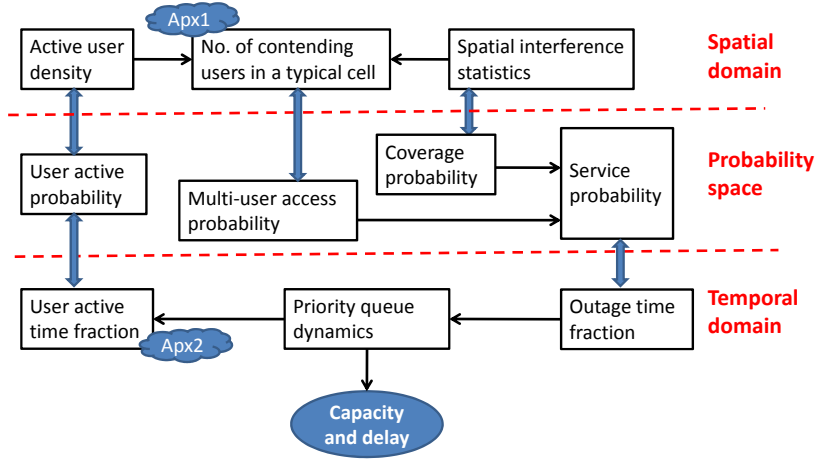}}
\caption{\small Connections among parameters in the spatial and temporal domains.}
\end{figure}

According to the ergodic theory, when the queueing process of the typical user has a statistical equilibrium, the queueing process is ergodic \cite{Queuebook} and hence the time average of a queueing parameter is identical to the average over the probability space. This allows us to map time-domain parameters to the probability space. Moreover, according to the theory of Palm probability in stochastic geometry, the spatial average of a large scale network is identical to the probabilistic average over the typical user/BS \cite{Geometry}. This allows us to map spatial-domain parameters to the probability space. Based on these mappings, we are able to deduce a chain of relations in Fig.~2 as follows.

Let us consider the \emph{outage time fraction} in a typical queue, which is the average fraction of time that secondary services is not available. The \emph{outage time fraction} affects the queueing dynamics and hence the \emph{user active time fraction}, which is the average fraction of time that there is secondary traffic buffered in the queue. The \emph{user active time fraction} is identical to the \emph{active probability} of a typical user, which affects the \emph{active user density} in the spatial domain. \emph{Active user density} and \emph{spatial interference statistics} both affect the distribution of the \emph{number of contending users in a typical cell}, which determines the \emph{multi-user access probability}. \emph{Spatial interference statistics} also affects the \emph{coverage probability} of a typical user. Moreover, as shown in (\ref{epin}), the \emph{access probability} and \emph{coverage probability} affects the \emph{service probability}, which ultimately determines the \emph{outage time fraction}. In other words, we have
\begin{equation}
\varepsilon = 1 - \rho_o
\end{equation}
where $\varepsilon$ is the service probability, $\rho_o$ is the outage time fraction, and $\rho_o$ can be expressed as a function of $\varepsilon$. The above chain of relations allows us to establish an equilibrium equation that connects first-order statistics of multiple parameters in the spatial and temporal domains. To establish the equation in an analytical form, two approximations are further introduced.


\subsection{Approximation to the number of in-coverage users in a typical cell}
The probability density function (PDF) of the size of a typical Poisson Voronoi cell is analytically intractable but can be approximated using the Monte Carlo method. Let $\lambda$ be the density of the underlying Poisson process and $V$ denote the random size of a typical Voronoi cell normalized by $1/\lambda$. The PDF of $V$ is given by \cite{JSF07}
\begin{equation}
f_V(x) = \frac{3.5^{3.5}}{\Gamma(3.5)} x^{2.5} e^{-3.5x}
\end{equation}
where $\Gamma(\cdot)$ is the gamma function. Moreover, consider an arbitrary user and the random size $U$ of the Voronoi cell to which the user belongs to. The PDF of $U$ normalized by $1/\lambda$ is given by \cite{Kim12}
\begin{equation}
f_U(x) = \frac{3.5^{4.5}}{\Gamma(4.5)} x^{3.5} e^{-3.5x}.
\end{equation}
The difference between $f_V(x)$ and $f_U(x)$ comes from the fact that a user has a higher chance to be covered by larger Voronoi cells.

Let us consider a single band of the network with BS density $\lambda_b$ and user density $\lambda_u$. Denoting $K_1$ as the random number of users in a non-empty Voronoi cell, the probability mass function (PMF) of $K_1$ is given by
\begin{equation}
f_{K_1}(k) = \int_0^{\infty} \frac{( \frac{\lambda_u}{\lambda_b} x)^k}{k!} e^{-\frac{\lambda_u}{\lambda_b} x} f_U(x) dx. \label{fK1}
\end{equation}
Let $K$ be the random number of `in-coverage' users in a Voronoi cell. The distribution of $K$ is related to the size and shape of the cell and it is difficult to obtain its exact PMF. Keeping the basic form of (\ref{fK1}), we propose an approximation to the PMF of $K$ given by
\begin{align}
f_{K}(k) \approx & \int_0^{\infty} \frac{( p \Lambda \frac{\lambda_u}{\lambda_b} x)^k}{k!} e^{- p \Lambda \frac{\lambda_u}{\lambda_b} x} f_U(x) dx \label{fK} \\
         =& \frac{3.5^{4.5} \Gamma(4.5 + k)}{\Gamma(4.5) k!} \frac{(\Lambda \lambda_u p / \lambda_b)^k}{(3.5 + \Lambda \lambda_u p / \lambda_b)^{4.5+k}}. \nonumber \label{fK}
\end{align}
where the parameters $p$ and $\Lambda$ are introduced to capture the effect of colored thinning on the original user point process. Here, $p$ is the probability that an arbitrary user falls within coverage (with target rate $R$) and can be calculated by (\ref{pnv}). The coefficient $\Lambda$ is an artificial constant to capture the effect of colored thinning. The value of $\Lambda$ is obtained by searching for the best fit of (\ref{fK}) to the empirical PMF obtained via Monte Carlo simulations. Through extensive simulations we find that given $\Lambda = 2/3$, the approximation in (\ref{fK}) is valid for a wide range of practical values for $\lambda_u$ and $\lambda_b$. Fig.~3 illustrates the accuracy of this approximation.

\begin{figure}[t]
\vspace{0.125in}
\centerline{\includegraphics[width=11cm,draft=false]{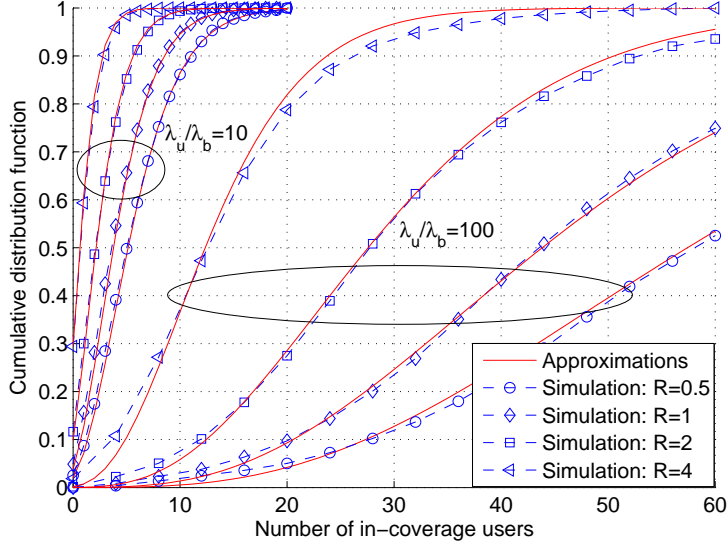}}
\caption{\small Approximation on the probability density function of in-coverage users in a typical cell ($\lambda_b=10^{-6}$).}
\end{figure}


\subsection{Approximation to user active time fraction}
A user is active when there are sessions buffered or being transmitted in the queue. We are interested in the probability $p_{active}$ that a typical user stays active. This probability also represents the fraction of time for a user to be active. Let $T$ be the random transmission time of a session. The mean value of $T$ is given by \cite{Queuebook}
\begin{equation}
\bar{T} = \frac{\bar{\beta}_s}{1-\rho_o}.
\end{equation}
The exact PDF of $T$ is not exponential, but for the purpose of calculating the user active probability, we assume that $T$ follows an exponential distribution with mean $\bar{T}$. The accuracy of this approximation is illustrated in Fig.~4, where we assume exponentially distributed $\beta_o$ and $\beta_s$, set $\bar{\alpha}_o=0.1$, $\bar{\alpha}_s=1$, $\varepsilon_o=0.3$, and let $\varepsilon_s$ varies from 0.1 to 0.5. The exact PDF of $T$ is obtained from its Laplace transform according to (\ref{zii1}). We find that the exponential approximation is valid under the condition that the arrival rate of outage is greater than the arrival rate of secondary traffic session. This condition is realistic because our system model considers delay at the session level, which has a larger time scale than outages caused by packet-level primary traffic.

\begin{figure}[t]
\vspace{0.125in}
\centerline{\includegraphics[width=11cm,draft=false]{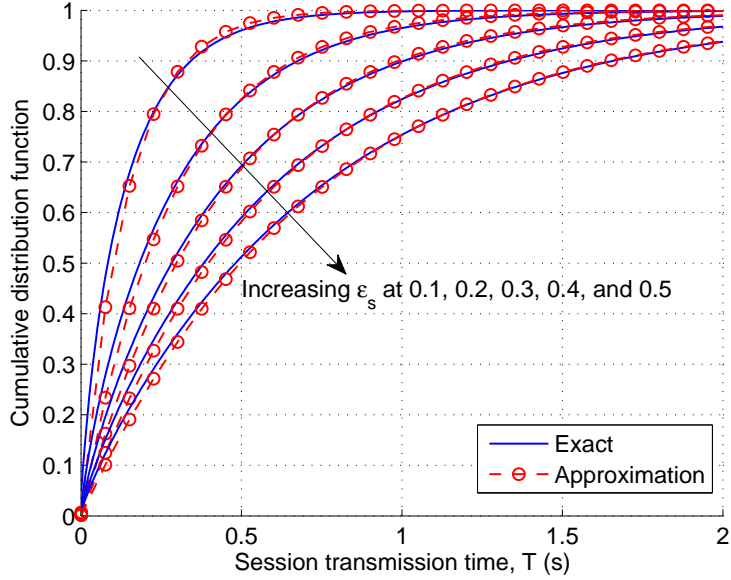}}
\caption{\small Exponential approximation for the CDF of session transmission time $T$ ($\bar{\alpha}_o=0.1$, $\bar{\alpha}_s=1$, $\varepsilon_o=0.3$).}
\end{figure}

Now let us consider a discrete-value stochastic process representing the number of sessions staying in the queue. Based on the above mentioned exponential approximation, it is easy to see that this process is a classic birth-death process \cite{Queuebook} characterized by an uniform birth rate $1/\bar{\alpha}_s$ and death rate $1/\bar{T}$. Let $\phi_k$ (k = 0,1,2,3...) denote the steady state probability that there are $k$ sessions in the queue. The equilibrium condition of the birth-death process gives $\phi_k = (\bar{T}/\bar{\alpha}_s)^k \phi_0$. By further considering the constraint of total probability $\Sigma_{k=0}^{\infty} \phi_k = 1$, we have $\phi_0 = 1 - \bar{T}/\bar{\alpha}_s$. It follows that
\begin{equation}
p_{active} = 1 - \phi_0 = \bar{T}/\bar{\alpha}_s = \frac{\bar{\beta}_s}{(1 - \rho_o) \bar{\alpha}_s } = \frac{\rho_s}{1 - \rho_o} = \frac{\rho_s}{\varepsilon}. \label{pactive}
\end{equation}

\section{Capacity-Delay Tradeoff Analysis}
\subsection{Useful Lemmas}

\begin{lemma}
Let $\varepsilon_n$ denote the probability that a user can be served with a target rate $R$ by the nearest BS operating in the $n$th band. We have
\begin{equation}
\varepsilon_n = \frac{\Omega_n}{\Lambda \lambda_n} \left[ 1- \left( 1 + \frac{\Lambda p_{n,v} \lambda_n}{3.5} \right)^{-3.5} \right]
\end{equation}
where $p_{n,v}$ is given by (\ref{pnv}), $\Lambda=2/3$, and
\begin{equation}
\lambda_n = \frac{\lambda_u}{\lambda_{b,n}} \cdot \frac{\rho_s}{\varepsilon} \cdot \frac{\Omega_n p_{n,v}}{\sum_{n=1}^N \Omega_n p_{n,v}}.
\end{equation}
\end{lemma}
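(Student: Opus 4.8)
The plan is to start from the factorization $\varepsilon_n = \Omega_n\, p_{n,v}\, p_{n,a}$ given in (\ref{epin}). The vacant probability $\Omega_n$ is a model input and the coverage probability $p_{n,v}$ is already supplied by (\ref{pnv}), so the entire content of the lemma collapses to evaluating the multi-user access probability $p_{n,a}$ under the TDMA sharing rule of Step~5. My key observation is that if a tagged in-coverage user shares its serving BS with $K$ other contending users, then equal time-sharing grants the tagged user a channel fraction $1/(K+1)$, and by the ergodicity/Palm correspondence invoked in Section~III this time fraction \emph{is} its access probability. Hence I would write $p_{n,a} = \mathbb{E}\!\left[1/(K+1)\right]$, where $K$ is the random number of \emph{other} active, in-coverage users competing in the tagged cell of band $n$, and the whole proof amounts to identifying the law of $K$ and performing this expectation.

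Next I would pin down the distribution of $K$ by re-using the colored-thinning approximation (\ref{fK}). The contenders in band $n$ are obtained from the full user field of density $\lambda_u$ by three successive thinnings: the temporal active fraction $p_{active}=\rho_s/\varepsilon$ from (\ref{pactive}); the band-selection/vacancy weighting $\Omega_n p_{n,v}/\sum_{m=1}^N \Omega_m p_{m,v}$ that determines which band a user settles in; and normalization by the local BS density $\lambda_{b,n}$. These three factors are exactly what the statement bundles into $\lambda_n$. Applying the coverage thinning $p_{n,v}$ together with the empirical correction $\Lambda$ from (\ref{fK}) then shows that, conditioned on a normalized Voronoi cell size $x$, the count $K$ is Poisson with mean $\mu x$, where $\mu = \Lambda\, p_{n,v}\,\lambda_n$, and $x$ is drawn from the user-biased density $f_U$.

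With the law of $K$ in hand the remaining steps are routine integrals. Conditioning on $x$ and using the Poisson identity
\[
\mathbb{E}\!\left[\frac{1}{K+1}\,\Big|\,x\right]=\frac{1-e^{-\mu x}}{\mu x},
\]
I would average over $f_U(x)=\frac{3.5^{4.5}}{\Gamma(4.5)}x^{3.5}e^{-3.5x}$, split the integrand, and evaluate the two Gamma integrals $\int_0^\infty x^{2.5}e^{-ax}\,dx=\Gamma(3.5)/a^{3.5}$ at $a=3.5$ and $a=3.5+\mu$. Simplifying with $\Gamma(4.5)=3.5\,\Gamma(3.5)$ and $3.5^{4.5}=3.5\cdot 3.5^{3.5}$ yields $p_{n,a}=\frac{1}{\mu}\big[1-(1+\mu/3.5)^{-3.5}\big]$. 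Substituting $\mu=\Lambda p_{n,v}\lambda_n$ and multiplying by $\Omega_n p_{n,v}$, the factor $p_{n,v}$ cancels against the one inside $\mu$, delivering the claimed closed form for $\varepsilon_n$.

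The main obstacle is not the algebra but the justification of the second step: treating the set of co-contending users as an approximately Poisson field whose intensity is captured by the single scalar $\lambda_n$. This is precisely where the spatial–temporal coupling of the network must be decoupled, and it rests on the independence/homogeneity assumption of Section~III together with the heuristic colored-thinning fit (\ref{fK}) (valid for $\Lambda=2/3$). I would therefore spend most care arguing that the temporal active probability, the band-settling weights, and the spatial cell-size statistics compose multiplicatively into $\mu=\Lambda p_{n,v}\lambda_n$; once that intensity is accepted, the harmonic-mean access computation and the Gamma integrals follow mechanically.
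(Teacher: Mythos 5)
Your proposal is correct and follows essentially the same route as the paper's own proof in Appendix~I: starting from the factorization in (\ref{epin}), identifying the per-band active-user density $\lambda_{u,n}=\lambda_u\,(\rho_s/\varepsilon)\,\Omega_n p_{n,v}/\sum_{m=1}^{N}\Omega_m p_{m,v}$, modeling the contender count through the colored-thinning PMF (\ref{fK}), and evaluating $p_{n,a}=\mathbb{E}\left[1/(K+1)\right]$ via the Poisson identity and the two Gamma integrals, after which $p_{n,v}$ cancels exactly as you note. The only cosmetic difference is that you condition on the normalized cell size and use the closed-form conditional expectation $\left(1-e^{-\mu x}\right)/(\mu x)$ before integrating, whereas the paper exchanges the sum and the integral directly; the algebra and the result are identical.
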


\begin{proof}
See Appendix I.
\end{proof}

\begin{lemma}
When the cellular system independently operates a total number of $N$ bands, the probability that a user can be served by at least one band with a targeted rate $R$ is
\begin{equation}
\varepsilon = 1 - \prod_{n=1}^{N}(1 - \varepsilon_n). \label{vare}
\end{equation}
\end{lemma}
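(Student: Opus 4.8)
The plan is to reduce the claim to the elementary identity for the probability of a union of independent events, so that essentially all of the content of the lemma lies in establishing that the per-band service events are mutually independent. First I would fix a typical user and, for each band $n$, introduce the event $A_n$ that the user can be served with target rate $R$ by the nearest BS operating in band $n$; by the definition underlying Lemma~1 we have $P(A_n)=\varepsilon_n$. The event that the user can be served by \emph{at least one} band is exactly $\bigcup_{n=1}^{N} A_n$, so by De Morgan's law its complement is $\bigcap_{n=1}^{N}\overline{A_n}$, the event that the user fails in every band.

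The key step is to argue that $A_1,\dots,A_N$ (equivalently $\overline{A_1},\dots,\overline{A_N}$) are mutually independent. Each $A_n$ is determined entirely by the band-$n$ configuration: whether the nearest band-$n$ BS is vacant (governed by the band-$n$ primary occupancy, with probability $\Omega_n$), whether the SINR from that BS exceeds the coverage threshold $2^{R/W_n}-1$ (a functional of the band-$n$ BS point process entering $p_{n,v}$), and whether the user wins multi-user contention in that cell (a functional of the active-user thinning in band $n$ entering $p_{n,a}$). The spatial interference model assumes that the BS point processes in distinct bands are independent PPPs, and the primary occupancy and band-selection randomness across distinct bands are likewise independent. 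Hence each $A_n$ is a function of a collection of random objects that is independent across $n$, from which mutual independence of the $A_n$ follows.

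Granting independence, I would then compute
\begin{equation}
P\left(\bigcap_{n=1}^{N}\overline{A_n}\right)=\prod_{n=1}^{N}P(\overline{A_n})=\prod_{n=1}^{N}(1-\varepsilon_n),
\end{equation}
and take complements to obtain $\varepsilon=P\left(\bigcup_{n=1}^{N}A_n\right)=1-\prod_{n=1}^{N}(1-\varepsilon_n)$, which is the asserted identity.

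The main obstacle is the independence justification rather than the algebra: one must be confident that the mean-field coupling entering $\varepsilon_n$ through the active-user density (which itself depends on the aggregate $\varepsilon$ via $p_{active}$) does not spoil the independence of the per-band events. I would resolve this by treating the active-user density as a fixed deterministic loading parameter within the typical-user probability space, so that, conditioned on this loading, the per-band outcomes depend only on the independent band-specific randomness and the factorization above is exact; the self-consistent value of $\varepsilon$ is then pinned down separately through the equilibrium relation $\varepsilon=1-\rho_o$ rather than within this lemma.
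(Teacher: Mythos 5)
Your proposal is correct and takes essentially the same route as the paper's proof, which simply observes that $1-\varepsilon$ equals the joint probability that all $N$ bands fail to serve the user and factorizes this using the across-band independence assumed in the model. Your additional care in justifying that independence (independent per-band BS point processes and primary occupancy, with the active-user density treated as a deterministic mean-field parameter whose self-consistent value is fixed separately by the equilibrium equation) is precisely the reasoning the paper leaves implicit in its one-line argument.
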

\begin{proof}
It is straightforward to see that ($1 - \varepsilon$) equals the joint probability that all bands fail to provide service to a user with the target rate $R$.
\end{proof}
According to Lemma 1, $\varepsilon_n$ is itself a function of $\varepsilon$. Therefore Lemma 2 gives a non-linear equation of $\varepsilon$, based on which the value of $\varepsilon$ can be calculated by solving the non-linear equation via numerical methods. In the special case that all bands have the same characteristics in terms of bandwidth, transmit power, and availability, (\ref{vare}) can be simplified to
\begin{equation}
\varepsilon_N = 1 - (1 - \varepsilon_n)^N.
\end{equation}
In case when $N=1$, (\ref{vare}) can be solved to give $\varepsilon$ as an explicit function related to capacity $C$ and target rate $R$ as follows
\begin{equation}
\varepsilon_1 = \frac{p_{n,v}}{3.5} \frac{\Lambda \lambda_u}{\lambda_{b,n}} \frac{C}{R} \left[ 1 - \left(1 - \frac{\Lambda \lambda_u C}{\Omega_n \lambda_{b,n} R} \right)^{-2/7} \right]^{-1}.
\end{equation}

\subsection{General results for capacity-delay tradeoff}
Once the value of $\varepsilon$ is obtained, we can evaluate the mean delay and delay distribution of a session. Established results for two-class M/G/1 priority queues with preemptive-resume policy \cite{Queuebook} can be directly applied to give the following two propositions.

\begin{proposition}
The mean delay of a session is given by
\begin{equation}
\bar{D} = \frac{1}{2 \varepsilon (\varepsilon - \frac{C}{R})} \left( \frac{\hat{\beta}_s}{\bar{\alpha}_s} + \frac{\hat{\beta}_o}{\bar{\alpha}_o} \right) + \frac{\bar{L}}{R \varepsilon}  \label{Dgen}
\end{equation}
where $\hat{\beta}_s$ and $\hat{\beta}_o$ are the second-order moments of random variables $\beta_s$ and $\beta_o$, respectively.
\end{proposition}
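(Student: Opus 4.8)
The plan is to observe that the session-level dynamics of Section II-C are precisely those of a two-class preemptive-resume M/G/1 priority queue, so that the claimed mean delay is simply the textbook sojourn-time formula for the low-priority class rewritten in the present notation; the task is then one of correct identification and substitution rather than of fresh analysis.

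First I would fix the class assignment. The outage stream is taken as the high-priority class (class 1) and the secondary sessions as the low-priority class (class 2). Outages arrive as a Poisson process with rate $1/\bar{\alpha}_o$ and carry service requirement $\beta_o$ with first two moments $\bar{\beta}_o$ and $\hat{\beta}_o$; sessions arrive as an independent Poisson process with rate $1/\bar{\alpha}_s$ and carry service requirement $\beta_s=L/R$ with moments $\bar{\beta}_s=\bar{L}/R$ and $\hat{\beta}_s$. The corresponding loads are $\rho_o$ and $\rho_s=C/R$, so the partial loads seen by the two classes are $\sigma_1=\rho_o$ and $\sigma_2=\rho_o+\rho_s$. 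Invoking the equilibrium identity $\varepsilon=1-\rho_o$ established in Section III-A turns these into $1-\sigma_1=\varepsilon$ and $1-\sigma_2=\varepsilon-C/R$.

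Next I would quote the classical result for preemptive-resume priority queues \cite{Queuebook}, namely that the mean sojourn time (queueing plus service) of class $p$ equals
\[
\frac{\bar{x}_p}{1-\sigma_{p-1}}+\frac{\sum_{i=1}^{p}\lambda_i\,\overline{x_i^2}}{2(1-\sigma_{p-1})(1-\sigma_p)},
\]
where $\lambda_i$, $\bar{x}_i$, and $\overline{x_i^2}$ denote the arrival rate and first two service moments of class $i$. Setting $p=2$ and substituting the identifications above, namely $\bar{x}_2=\bar{L}/R$, $\lambda_1\overline{x_1^2}=\hat{\beta}_o/\bar{\alpha}_o$, $\lambda_2\overline{x_2^2}=\hat{\beta}_s/\bar{\alpha}_s$, together with $1-\sigma_1=\varepsilon$ and $1-\sigma_2=\varepsilon-C/R$, collapses the expression term by term into (\ref{Dgen}), and I would identify $\bar{D}$ with this sojourn time of a tagged session.

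The step that genuinely requires justification is not the algebra but the modeling reduction that legitimizes applying a textbook M/G/1 formula here: that both streams are Poisson, that an outage preempts and later resumes secondary transmission with no work lost (the resume policy of Section II-C), and that the tagged user's queue is in statistical equilibrium and decoupled from the rest of the network (the independence and constant-interference arguments of Section III-A). Granting these, plus the stability condition $\varepsilon-C/R>0$ that keeps $\sigma_2<1$ and the denominators positive, the proposition follows as a one-line specialization of the cited formula, which is how I would present it.
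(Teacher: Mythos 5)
Your proposal is correct and takes essentially the same route as the paper: the paper proves Proposition 1 simply by stating that established results for two-class M/G/1 priority queues with preemptive-resume policy \cite{Queuebook} apply directly, which is exactly your identification of outages as the high-priority Poisson class and sessions as the low-priority class, followed by substitution of $1-\sigma_1=\varepsilon$, $1-\sigma_2=\varepsilon-C/R$, $\lambda_1\hat{\beta}_o=\hat{\beta}_o/\bar{\alpha}_o$, and $\lambda_2\hat{\beta}_s=\hat{\beta}_s/\bar{\alpha}_s$ into the classical sojourn-time formula. Your write-up merely makes explicit the textbook substitution (and the modeling premises of Sections II-C and III-A) that the paper leaves implicit in the citation.
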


The delay of a session is the total time the session spends in the queue and consists of two parts. The first part is waiting time $W$, which is the duration from the moment of arrival to the moment when the transmission starts. The second part is transmission time $T$, which is the duration from the moment when transmission starts to the moment when the transmission ends. It follows that $D = W + T$, where $W$ and $T$ are independent RVs \cite{Queuebook}. The PDF of $D$ cannot be obtained directly. However, the Laplace transforms of the PDFs of $W$ and $T$ can be evaluated. Let $\mathfrak{L}_X(\cdot)$ denote the Laplace transform to the PDF of random variable $X$, we have the following proposition.
\begin{proposition}
The Laplace transform of the random delay $D$ of a typical session is given by
\begin{equation}
\mathfrak{L}_D(s) = \mathfrak{L}_T(s) \mathfrak{L}_W(s). \label{lds}
\end{equation}
Here, $\mathfrak{L}_T(s)$ is given by
\begin{equation}
\mathfrak{L}_T(s) = \mathfrak{L}_{\beta_{\rm{s}}} \left[ K \left( s \right) \right] \label{zii1}
\end{equation}
where  $\mathfrak{L}_{\beta_{s}}(\cdot)$ is the Laplace transform of $\beta_s$ and
\begin{equation}
K\left( s \right) = s + \frac{1 - G\left( s \right)}{\bar{\alpha}_o}. \label{zii2}
\end{equation}
Here, $G(s)$ is the solution with the smallest absolute value that satisfies the following equation
\begin{equation}
x - {\mathfrak{L}_{{\beta_o}}}\left( {s + \frac{1 - x}{\bar{\alpha}_o}} \right) = 0 \label{zii3}
\end{equation}
where $\mathfrak{L}_{\beta_{o}}(\cdot)$ is the Laplace transform of $\beta_o$.
The second term $\mathfrak{L}_W(s)$ in (\ref{lds}) is given by
\begin{equation}
\mathfrak{L}_W(s) = (1 - \rho_o - \rho_s) {\bar{\alpha}_s}{{K\left( s \right)} \over {{\mathfrak{L}_{{\beta _s}}}\left[ {K\left( s \right)} \right] + {{\bar{\alpha}}_s}s - 1}}.   \label{wii}
\end{equation}
\end{proposition}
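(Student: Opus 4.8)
The plan is to build on the stated decomposition $D = W + T$ and the independence of $W$ and $T$ in a preemptive-resume M/G/1 system, which at once yields the product form (\ref{lds}); the real work is then to identify $\mathfrak{L}_T$ and $\mathfrak{L}_W$ separately. Throughout I write $\lambda_o = 1/\bar{\alpha}_o$ and $\lambda_s = 1/\bar{\alpha}_s$ for the outage and session arrival rates, and treat the outage stream as the high-priority class and the secondary sessions as the low-priority class. The first object to pin down is $G(s)$: I claim it is precisely the Laplace transform of the \emph{busy period} of the high-priority (outage) subsystem in isolation, i.e.\ an M/G/1 queue with arrival rate $\lambda_o$ and service law $\beta_o$. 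Indeed, the Tak\'acs functional equation for that busy period $\Theta$ reads $\mathfrak{L}_\Theta(s)=\mathfrak{L}_{\beta_o}\!\left(s+\lambda_o(1-\mathfrak{L}_\Theta(s))\right)$, which is exactly (\ref{zii3}); the root of smallest modulus is selected because it is the unique solution analytic in the right half-plane with $\mathfrak{L}_\Theta(0^+)=1$, the only probabilistically admissible busy-period transform.

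Next I would derive $\mathfrak{L}_T(s)$. The transmission time $T$ is the \emph{completion time} of a low-priority session: the real-time span from when service first begins to when it ends, during which the actual work performed on the session totals $\beta_s$, but every outage arriving during that ``green'' service time freezes the session and inflates the span by a full high-priority busy period. Since outages arrive as a Poisson process of rate $\lambda_o$, the number arriving during the actual service is, conditionally on $\beta_s=t$, Poisson with mean $\lambda_o t$, each contributing an independent busy period with transform $G(s)$. Conditioning on $\beta_s=t$ and using the probability generating function of the Poisson count gives $\mathbb{E}[e^{-sT}\mid\beta_s=t]=\exp\!\left(-t\left(s+\lambda_o(1-G(s))\right)\right)=e^{-K(s)t}$, with $K(s)=s+(1-G(s))/\bar{\alpha}_o$ as in (\ref{zii2}); averaging over $\beta_s$ yields $\mathfrak{L}_T(s)=\mathfrak{L}_{\beta_s}[K(s)]$, which is (\ref{zii1}).

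For $\mathfrak{L}_W(s)$ the plan is to reduce the low-priority dynamics to an ordinary M/G/1 queue whose service times are the completion times $T$. This reduction is legitimate because the completion times are i.i.d., independent of the Poisson session arrivals, and within a low-priority busy period they run back-to-back (a completion time always ends during green service, leaving the server immediately free for the next session). The Pollaczek--Khinchine formula then gives the waiting-time transform of this auxiliary queue as $(1-\rho_T)s/[s-\lambda_s(1-\mathfrak{L}_T(s))]$ with load $\rho_T=\lambda_s\bar{T}=\rho_s/(1-\rho_o)$. The one feature not captured by this reduction is the extra delay a tagged session suffers when it arrives to an empty low-priority queue but finds the server inside an ongoing high-priority busy period, in which case it must wait out the residual of that busy period. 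By PASTA this residual delay $Y$ equals $0$ with probability $1-\rho_o$ and is the stationary residual $\Theta_e$ of a high-priority busy period with probability $\rho_o$, a Fuhrmann--Cooper-type decomposition giving $\mathfrak{L}_Y(s)=(1-\rho_o)+\rho_o\mathfrak{L}_{\Theta_e}(s)=(1-\rho_o)K(s)/s$. Multiplying the two independent contributions and using $(1-\rho_T)(1-\rho_o)=1-\rho_o-\rho_s$ collapses the spare $s$ factors and reproduces (\ref{wii}) after clearing $\bar{\alpha}_s$.

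The main obstacle I anticipate is the rigorous justification of the $\mathfrak{L}_W$ step: establishing that the completion times genuinely act as i.i.d.\ service times in an embedded M/G/1 chain, and that the only residual correction is the independent initial high-priority delay $Y$ with the mixture law above. This needs a careful ready-to-start argument (showing every completion time terminates on green service, so successive low-priority services abut) together with the independence of $Y$ from the Pollaczek--Khinchine waiting time, which is where the decomposition for queues with generalized vacations must be invoked. By contrast, the busy-period identification of $G$ and the compound-Poisson derivation of $\mathfrak{L}_T$ are routine once the high/low split is set up; the subtle bookkeeping lies entirely in separating the ``own-service'' inflation (already absorbed into $T$) from the ``initial-blockage'' inflation (the surviving factor $K(s)/s$ in $W$).
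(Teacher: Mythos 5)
Your derivation is correct, and it reconstructs from first principles what the paper does not actually prove: the paper's entire ``proof'' of Proposition 2 is the remark that established results for two-class M/G/1 preemptive-resume priority queues (Cohen's book) can be directly applied, so the burden you took on is precisely the content the paper outsources to the literature. Your identification of $G$ as the outage busy-period transform and the compound-Poisson derivation of $\mathfrak{L}_T(s)=\mathfrak{L}_{\beta_s}[K(s)]$ are the classical arguments, and your waiting-time formula is algebraically identical to (\ref{wii}); in particular your claimed identity $(1-\rho_o)+\rho_o\mathfrak{L}_{\Theta_e}(s)=(1-\rho_o)K(s)/s$ does hold, since the mean outage busy period is $\bar{\beta}_o/(1-\rho_o)$. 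The one place where your route is heavier than necessary is the step you yourself flag as the main obstacle: justifying the Fuhrmann--Cooper-style independence of the initial-blockage term from the Pollaczek--Khinchine waiting time of the auxiliary completion-time queue. The standard derivation (essentially the one behind the cited results) avoids this entirely: a tagged session begins service exactly when the server has cleared the total workload $V$ (both classes) present at its arrival plus all outage work arriving in the meantime --- new sessions queue behind it, new outages preempt --- so $W$ is a delayed busy period initiated by $V$ and driven only by outage arrivals, giving $\mathfrak{L}_W(s)=\mathfrak{L}_V(K(s))$ by the same Tak\'acs first-passage argument you already use for $T$, with $V$ read off at a Poisson arrival epoch via PASTA. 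Inserting the Pollaczek--Khinchine workload transform
\begin{equation*}
\mathfrak{L}_V(u)=\frac{(1-\rho_o-\rho_s)\,u}{u-\frac{1}{\bar{\alpha}_o}\left(1-\mathfrak{L}_{\beta_o}(u)\right)-\frac{1}{\bar{\alpha}_s}\left(1-\mathfrak{L}_{\beta_s}(u)\right)}
\end{equation*}
at $u=K(s)$ and using $G(s)=\mathfrak{L}_{\beta_o}(K(s))$, i.e.\ $\frac{1}{\bar{\alpha}_o}\left(1-\mathfrak{L}_{\beta_o}(K(s))\right)=K(s)-s$, collapses the denominator to $s-\frac{1}{\bar{\alpha}_s}\left(1-\mathfrak{L}_T(s)\right)$, which is (\ref{wii}) after multiplying through by $\bar{\alpha}_s$. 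So your proposal is sound and yields the right transforms, but the workload route buys the same formula using only PASTA, work conservation, and the first-passage argument, with no vacation-decomposition theorem to verify.
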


\subsection{Capacity-delay tradeoff in special cases}
\subsubsection{Exponential distribution}
Propositions 1 and 2 are applicable when both the file size $L$ and outage duration $\beta_o$ follow general distributions. In the special case where both $L$ and $\beta_o$ follow exponential distributions, we have $\hat{\beta}_s = 2(\bar{\beta}_s)^2$ and $\hat{\beta}_o = 2(\bar{\beta}_o)^2$. The mean delay becomes
\begin{equation}
\bar{D} = \frac{1}{ \varepsilon (\varepsilon - \frac{C}{R})} \left( \frac{C \bar{L}}{R^2} + (1 - \varepsilon)^2 \bar{\alpha}_o \right) + \frac{\bar{L}}{R \varepsilon}.  \label{Dexp}
\end{equation}
Moreover, given an exponential random variable $X \sim \exp(\bar{X})$, its Laplace transform can be evaluated as
\begin{equation}
\mathfrak{L}_{exp} (s) = {\frac{1}{1 + s \bar{X}}}. \label{Lx}
\end{equation}
Based on (\ref{Lx}), closed-form Laplace transforms of $\beta_s=L/R$ and $\beta_o$ can be obtained in (\ref{zii1}) and (\ref{zii3}). It follows that Eqn. (\ref{zii3}) can be solved explicitly to give
\begin{equation}
G\left( s \right) = \frac{\left( {1 + {\varepsilon_o} + s{{\bar{\beta}}_o}} \right) - \sqrt {{{\left( {1 + {\varepsilon_o} + s{{\bar{\beta}}_o}} \right)}^2} - 4{\varepsilon_o}}} {2 \varepsilon _o }.
\end{equation}
\subsubsection{Gamma distribution}
A more general distribution we can consider for $L$ and $\beta_o$ is  Gamma distribution, which provides more flexibility to model a variety of practical scenarios. The PDF of Gamma distribution is given by
\begin{equation}
\Gamma(k, \theta) = \frac{1}{\theta ^k} \frac{1}{\Gamma \left( k \right)} t^{k - 1} e^{- \frac{t}{\theta}}
\end{equation}
where $k$ and $\theta$ are the shape and scale parameters, respectively. The first and second moments of the Gamma distribution are $k \theta$ and $k (k+1) \theta^2$, respectively. Let $L \sim \Gamma(k_L, \bar{L}/k_L)$ and $\beta_o \sim \Gamma(k_{\beta_o}, \bar{\beta}_o/k_{\beta_o})$. Here we introduce two new parameters $k_L$ and $k_{\beta_o}$ to characterize the shape of distributions of $L$ and $\beta_o$, respectively. It follows that $\beta_s = L/R \sim \Gamma(k_L, \bar{L}/(k_L R))$, and the mean delay in (\ref{Dgen}) becomes
\begin{equation}
\bar{D} = \frac{1}{2 \varepsilon (\varepsilon - \frac{C}{R})} \left( \frac{C \bar{L}}{R^2} \frac{k_L+1}{k_L} + (1 - \varepsilon)^2 \bar{\alpha}_o \frac{k_{\beta_o}+1}{k_{\beta_o}} \right) + \frac{\bar{L}}{R \varepsilon}. \label{Dgamma}
\end{equation}
It is easy to see that when $k_L=1$ and $k_{\beta_o}=1$, the Gamma distribution is reduced to exponential distribution and (\ref{Dgamma}) is reduced to (\ref{Dexp}).

To evaluate the delay distribution, we have the Laplace transform of $G \sim \Gamma(k, \theta)$ given by
\begin{equation}
\mathfrak{L}_{gamma} \left( s \right) = {\left( {1 + \theta s} \right)^{- k}}. \label{Lg}
\end{equation}
Based on (\ref{Lg}), closed-form Laplace transforms of $\beta_s=L/R$ and $\beta_o$ can be obtained according to (\ref{zii1}) and (\ref{zii3}). It follows that when $k$ is an integer or a rational fraction, Eqn. (\ref{zii3}) yields a polynomial form. Therefore the function $G(s)$ in (\ref{zii3}) can be easily solved using existing root-finding algorithms for polynomials.

\section{Capacity Limit and Scaling}

This section studies the fundamental capacity limit at the interference limited regime and investigates how the capacity limit scales with bandwidth and user-BS density ratio. The capacity limit is defined as the maximum capacity that permits a stable queue at a typical user. It is also the capacity that gives infinite mean delay. Interference-limited regime means that power $P_n$ is sufficiently large to justify the closed-form SINR CCDF in (\ref{CCDFlim}). For simplicity, we assume that the $N$ bands have homogeneous characteristics in terms of bandwidth and BS density. Two different cases are considered. The first case assumes a fixed bandwidth of each band, which means the system bandwidth scales linearly with $N$. This case is useful when we want to investigate the impact of spectrum aggregation on the system capacity. The second case assumes a fixed system bandwidth, which means the bandwidth per band is inversely proportional to $N$. This case is relevant when we are interested in the impacts of spectrum sharing and channelization on the system capacity. Throughout this section, we use the capital letter `N' as the footnote of parameters to emphasize that we consider homogeneous bands. For example, $W_n$, $\varepsilon_n$ and $\lambda_n$ are replaced by $W_N$, $\varepsilon_N$ and $\lambda_N$, respectively.

\subsection{Fixed bandwidth per band}

\begin{proposition}
In the case of fixed bandwidth per band, the capacity limit $C^{\lim}_I$ is a function of $R$, $\lambda_u$, $\lambda_b$, and $N$ given by
\begin{equation}
C^{\lim}_I = R \left[ 1 - (1 - \varepsilon_N)^N  \right] \label{ClimI}
\end{equation}
where
\begin{equation}
\varepsilon_N = \frac{\Omega_N}{\Lambda \lambda_N}\left[ 1 - \left(1 + \frac{\Lambda \lambda_N p_N^I}{3.5} \right)^{-3.5}  \right].\label{epilimI}
\end{equation}
Here, $p_N^I$ is given by
\begin{equation}
p_N^I = \left( 1 + \sqrt{2^{R/W_N}-1} \arctan\sqrt{2^{R/W_N}-1} \right)^{-1} \label{plimI}
\end{equation}
and $\lambda_N = \lambda_u/(\lambda_{b,n} N)$.
\end{proposition}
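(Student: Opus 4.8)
The plan is to identify the capacity limit with the point at which the mean-delay expression of Proposition~1 diverges, and then to exploit the fact that the spatial--temporal fixed point of Lemma~1 collapses to an explicit form precisely at that point. First I would inspect (\ref{Dgen}): every term there is finite except the prefactor $1/[2\varepsilon(\varepsilon - C/R)]$, which blows up as $C/R \uparrow \varepsilon$. Hence the queue is stable exactly while $\varepsilon > C/R$, and the capacity limit is the largest admissible $C$, namely the solution of $C/R = \varepsilon$, i.e. $C^{\lim}_I = R\varepsilon$. Specializing Lemma~2 to $N$ homogeneous bands gives $\varepsilon = 1 - (1-\varepsilon_N)^N$, which reproduces the outer form (\ref{ClimI}); it remains only to evaluate $\varepsilon_N$ at the limit.

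The crucial observation is what happens to the density $\lambda_n$ of Lemma~1 at the capacity limit. That density carries the factor $\rho_s/\varepsilon$, which by the definition $\rho_s = C/R$ and by (\ref{pactive}) is exactly the active probability $p_{active}$. At the limit $C = R\varepsilon$, so $\rho_s/\varepsilon = 1$: the typical user is active essentially all of the time, consistent with a queue sitting on its stability boundary. With this factor equal to one, the otherwise implicit $\varepsilon$-dependence of $\lambda_n$ disappears. For homogeneous bands the remaining colored-thinning weight $\Omega_n p_{n,v}/\sum_n \Omega_n p_{n,v}$ reduces to $1/N$, and therefore $\lambda_N = \lambda_u/(\lambda_{b,n} N)$, as claimed.

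Next I would specialize the coverage probability to the interference-limited regime. Substituting the simplified complementary CDF (\ref{CCDFlim}) into the coverage definition (\ref{pnv}) with argument $x = 2^{R/W_N}-1$ yields $p_N^I$ in (\ref{plimI}); note that in this regime the dependence on $\lambda_{b,n}$ and $P_n$ cancels, leaving $p_N^I$ a function of $R/W_N$ alone. Inserting the explicit $\lambda_N$ and $p_N^I$ into the Lemma~1 formula for $\varepsilon_n$ then gives (\ref{epilimI}), closing the chain.

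The main obstacle is conceptual rather than computational. Away from the limit, $\lambda_N$ depends on $\varepsilon$ through $\rho_s/\varepsilon = C/(R\varepsilon)$, so Lemmas~1 and~2 together form a genuine fixed-point equation for $\varepsilon$, and one must check that imposing $C^{\lim}_I = R\varepsilon$ is self-consistent. What makes the argument clean is precisely the cancellation above: the divergence condition $C = R\varepsilon$ forces $\rho_s/\varepsilon = 1$ and thereby renders $\lambda_N$, $p_N^I$, $\varepsilon_N$, and hence $\varepsilon$ fully explicit, so no residual fixed point remains and the stated formulas follow directly.
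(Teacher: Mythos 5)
Your proposal is correct and takes essentially the same route as the paper: the paper's proof likewise identifies the capacity limit with the queue-stability boundary $1-\rho_o-\rho_s>0$ becoming an equality (i.e., $\varepsilon=C/R$, equivalently $\rho_s/(1-\rho_o)=1$) and substitutes this condition into Lemma~1 so that $\varepsilon_N$ becomes explicit. The details you spell out --- the divergence of the mean delay in Proposition~1, the collapse of the fixed point because $\rho_s/\varepsilon=1$, and the homogeneous-band reduction of the selection weight to $1/N$ --- are precisely the unstated steps behind the paper's terse three-sentence argument.
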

\begin{proof}
A stable queue requires $1 - \rho_o - \rho_s >0$, which gives $\varepsilon > \rho_s = C/R$. The capacity limit is achieved when the equality holds, i.e., $\varepsilon = C/R$ or $\rho_d/(1-\rho_o)=1$. Substituting this equation into Lemma 1 yields $\varepsilon_N$ in (\ref{epilimI}).
\end{proof}

We note that by considering the limiting condition, $\varepsilon_N$ can be expressed as an explicit function of other parameters (as opposed to numerically solving a non-linear equation in Lemma 2). This allows us to express the capacity limit as a closed-form function of $R$, $N$, $\lambda_u$, and $\lambda_b$, as shown in (\ref{ClimI}). In the case of fixed bandwidth per band, we are interested in the following optimization problem: given $N$ and the network environment $\lambda_u$ and $\lambda_b$, how can we choose a proper target rate $R$ to maximize the capacity limit? This optimization problem can be formally stated as $C_I^{\max} = \max\limits_{R}(C^{\lim}_I)$. To better understand the nature of this optimization problem, representative numerical examples are presented in Fig.~5 to show $C^{\lim}_I$ as a function of $R$. We see that there is an unique maximum value of $C^{\lim}_I$, which is achieved when the first-order derivative $dC^{\lim}_I/dR$ equals zero. According to Proposition 3, the derivative function $dC^{\lim}_I/dR$ can be obtained in closed-form to give the following corollary.
\begin{corollary}
The optimum value $R$ for the optimization problem $C_I^{\max} = \max \limits_{R}(C^{\lim}_I)$ is given by the root of the following non-linear equation:
\begin{equation}
\frac{d C^{\lim}_I}{d R} = R f_o^{'}(R) + f_0(R) = 0
\end{equation}
where
\begin{align}
f_0(R) = & 1 - (1-\varepsilon_N)^N  \\
f_0^{'}(R) = & f_1(R) \cdot f_2(R) \cdot f_3(R) \cdot f_4(R) \\
f_1(R) = & N(1-\varepsilon_N)^{N-1} \\
f_2(R) = & \left( 1 + \frac{\lambda_N p_N^I}{3.5} \right)^{-4.5} \\
f_3(R) = & - \frac{\arctan(\chi_N) + (1+\chi_N^2)^(-1)}{( 1 + \chi_N \arctan(\chi_N))^2} \\
f_4(R) = & \frac{\ln 2}{2} 2^R \left( 2^R -1 \right)^{-1/2} \\
\chi_N = & \sqrt{2^R -1}.
\end{align}
In the above equations, $p_N^I$ is defined in (\ref{plimI}) and $\varepsilon_N$ is defined in (\ref{epilimI}).
\end{corollary}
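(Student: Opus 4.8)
The plan is to treat the corollary as a first-order optimality condition and reduce it to a chain-rule computation. By Proposition~3 we may write $C^{\lim}_I = R\,f_0(R)$ with $f_0(R) = 1 - (1-\varepsilon_N)^N$, where the $R$-dependence of $\varepsilon_N$ is inherited entirely from $p_N^I$. Since the representative plots in Fig.~5 exhibit a single interior maximum of $C^{\lim}_I$ over $R$, the optimal target rate must be a stationary point, i.e.\ it must satisfy $dC^{\lim}_I/dR = 0$. The whole task therefore collapses to differentiating $C^{\lim}_I$ and organizing the result into the advertised factors $f_1,\dots,f_4$.

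First I would apply the product rule to $C^{\lim}_I = R f_0(R)$, giving immediately $dC^{\lim}_I/dR = f_0(R) + R\,f_0'(R)$, which is exactly the outer form asserted in the corollary. It then remains only to compute $f_0'(R)$. The observation that makes this clean is that in the capacity-limited regime the density ratio $\lambda_N = \lambda_u/(\lambda_{b,n}N)$ is \emph{constant} in $R$: the limiting condition $\varepsilon = C/R$ (equivalently $\rho_s/\varepsilon = 1$) used in the proof of Proposition~3 cancels the $\rho_s/\varepsilon$ factor appearing in Lemma~1, while the homogeneous-band assumption reduces $\Omega_n p_{n,v}/\sum_n \Omega_n p_{n,v}$ to $1/N$. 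Hence the only surviving $R$-dependence in $f_0$ runs through $p_N^I$, and $p_N^I$ depends on $R$ solely through $\chi_N = \sqrt{2^R-1}$.

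Second, I would differentiate $f_0$ by the chain rule along the composition $R \mapsto \chi_N \mapsto p_N^I \mapsto \varepsilon_N \mapsto f_0$. Differentiating the outer power yields $\partial f_0/\partial \varepsilon_N = N(1-\varepsilon_N)^{N-1} = f_1(R)$; differentiating $\varepsilon_N$ in (\ref{epilimI}) with respect to $p_N^I$ produces $\Omega_N(1+\Lambda\lambda_N p_N^I/3.5)^{-4.5}$, matching $f_2(R)$ up to the constant prefactors (the $-3.5$ exponent and the leading $\Omega_N/(\Lambda\lambda_N)$ combine so that this is the only surviving term); differentiating $p_N^I = (1+\chi_N\arctan\chi_N)^{-1}$ with respect to $\chi_N$ gives the ratio $f_3(R)$; and finally $d\chi_N/dR = \tfrac{\ln 2}{2}\,2^R(2^R-1)^{-1/2} = f_4(R)$. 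Multiplying these four factors yields $f_0'(R) = f_1 f_2 f_3 f_4$, and substituting back reproduces the displayed stationarity equation.

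The main obstacle is bookkeeping rather than conceptual: I must track the several sign flips carefully, since the reciprocal in $p_N^I$ and the negative exponent $-3.5$ in $\varepsilon_N$ each contribute a sign, and only their net effect should appear as the explicit minus in $f_3$. I would also attribute the constant prefactors ($\Omega_N$, and the $\Lambda$ inside the bracket of $f_2$) consistently, and check the numerator of $\partial p_N^I/\partial\chi_N$, which I expect to read $\arctan\chi_N + \chi_N/(1+\chi_N^2)$ rather than the stated $\arctan\chi_N + (1+\chi_N^2)^{-1}$. Finally, I note that existence and uniqueness of the maximizer is taken from the numerical evidence in Fig.~5 rather than established analytically, so the corollary is best read as the closed-form rendering of the necessary stationarity condition $dC^{\lim}_I/dR = 0$, with the derivative evaluated in the factored form above.
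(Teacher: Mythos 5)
Your proposal is correct and coincides with the paper's own (implicit) derivation: the corollary is nothing more than the product rule applied to $C^{\lim}_I = R\,f_0(R)$ followed by the chain rule along $R \mapsto \chi_N \mapsto p_N^I \mapsto \varepsilon_N \mapsto f_0$, with the key observation that $\lambda_N = \lambda_u/(\lambda_{b,n}N)$ is constant in $R$ at the capacity limit so that all $R$-dependence flows through $p_N^I$. The two discrepancies you flag are genuine typos in the paper's statement rather than gaps in your argument: the numerator of $f_3$ should read $\arctan\chi_N + \chi_N/(1+\chi_N^2)$, and $f_2$ should carry the factor $\Omega_N$ together with $\Lambda$ inside the parenthesis.
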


Based on the above corollary, the first-order derivative function $dC^{\lim}_I/dR$ is calculated and shown in Fig.~5. The root is obtained by solving the non-linear equation and shown to be accurate for achieving the maximum value of $C^{\lim}_I$.

\begin{figure}[t]
\vspace{0.125in}
\centerline{\includegraphics[width=11cm,draft=false]{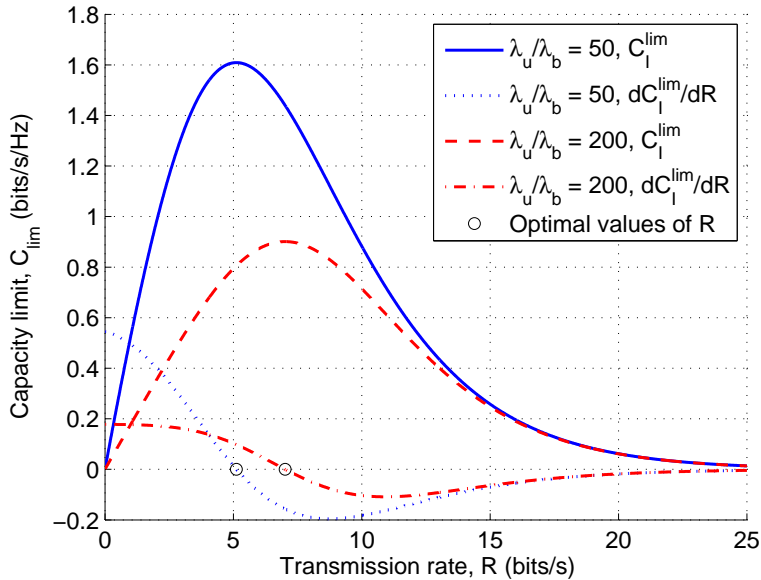}}
\caption{\small Capacity limit $C^{\lim}_{I}$ and its first-order derivative as a function of $R$ (fixed bandwidth per band, N=5).}
\end{figure}

\subsection{Fixed system bandwidth}
In this case, the total system bandwidth is normalized to 1 and the bandwidth of each band becomes $1/N$. Define the capacity limit $C^{\lim}_{II}$ as the maximum achievable capacity for a stable queue given $R$, $N$, $\lambda_u$, and $\lambda_b$. Further define the maximum capacity as $C_{II}^{\max} = \max\limits_{R}(C^{\lim}_{II})$. We have the following two propositions.

\begin{proposition}
The capacity limit $C^{\lim}_{II}$ can be calculated according to Proposition 3 by replacing $p_N^I$ with $p_N^{II}$, where
\begin{equation}
p_N^{II} = \left( 1 + \sqrt{2^{RN/W_N}-1} \arctan\sqrt{2^{RN/W_N}-1} \right)^{-1}. \label{plimII}
\end{equation}
\end{proposition}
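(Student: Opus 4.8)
The plan is to observe that the fixed-system-bandwidth case differs from the fixed-bandwidth-per-band case (Proposition 3) in exactly one place, namely the per-band bandwidth, and that this per-band bandwidth enters the entire capacity-limit derivation only through the coverage probability. I would therefore trace through Proposition 3 and identify every quantity that could conceivably depend on how the spectrum is partitioned across bands.

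First I would recall that in the interference-limited regime the complementary SINR CDF in (\ref{CCDFlim}) depends only on the SINR threshold $x$ and is otherwise bandwidth-free. Bandwidth enters solely through the coverage condition $W_n \log_2(1+\gamma_n) \geq R$, which maps the rate requirement $R$ and the per-band bandwidth to the SINR threshold $2^{R/W_n}-1$; combining this with (\ref{CCDFlim}) yields the per-band coverage probability. In the fixed-bandwidth-per-band case the per-band bandwidth is $W_N$, giving the threshold $2^{R/W_N}-1$ and hence $p_N^I$ in (\ref{plimI}).

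Next I would substitute the fixed-system-bandwidth allocation: with the total bandwidth held fixed and split evenly across $N$ bands, each band carries bandwidth $W_N/N$. Repeating the coverage-condition calculation with this narrower per-band bandwidth replaces the exponent $R/W_N$ by $RN/W_N$, producing the threshold $2^{RN/W_N}-1$ and therefore $p_N^{II}$ exactly as claimed in (\ref{plimII}).

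Finally I would argue that no other ingredient of Proposition 3 is altered by the partitioning choice. The BS and user densities $\lambda_{b,n}$ and $\lambda_u$, and hence $\lambda_N = \lambda_u/(\lambda_{b,n} N)$, are unaffected; the limiting condition $\varepsilon = C/R$ and the expression (\ref{epilimI}) for $\varepsilon_N$ depend on bandwidth only through the coverage probability; and the capacity-limit formula (\ref{ClimI}) is otherwise identical. Hence $C^{\lim}_{II}$ is obtained from Proposition 3 verbatim after the single substitution of $p_N^{II}$ for $p_N^I$. I expect the only delicate point, and thus the main obstacle, to be confirming that the interference statistics are genuinely bandwidth-independent in the interference-limited regime, so that narrowing each band to $W_N/N$ changes coverage but leaves the SINR distribution and all spatial densities intact; once this is verified, the result follows immediately.
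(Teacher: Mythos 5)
Your proposal is correct and takes essentially the same route as the paper, whose entire proof is the one-line remark that Proposition 4 follows by repeating the proof of Proposition 3 with the channel bandwidth set to $1/N$, so that the only change is the coverage-probability threshold becoming $2^{RN/W_N}-1$. Your explicit verification that the interference-limited SINR distribution, the densities, and the limiting condition $\varepsilon = C/R$ are all bandwidth-independent simply spells out what the paper leaves implicit.
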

\begin{proof}
The proof is straightforward by following the proof of Proposition 3 and setting the channel bandwidth to $1/N$.
\end{proof}

\begin{proposition}
The maximum capacity is given by
\begin{equation}
C^{\max}_{II} = C^{\max}_{I}/N
\end{equation}
 where $C^{\max}_{I}$ can be calculated from Corollary 1.
\end{proposition}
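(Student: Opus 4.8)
The plan is to establish the identity through a single rescaling of the target rate, $R \mapsto NR$. By Proposition 5, the Case II capacity limit $C^{\lim}_{II}$ is obtained from the Case I formula (\ref{ClimI})--(\ref{epilimI}) by the sole substitution $p_N^I \to p_N^{II}$; everything else in the expression is structurally identical. The crux is to recognize that this substitution is equivalent to scaling the rate by $N$, after which the prefactor $R$ supplies the claimed $1/N$ factor.

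First I would confirm that, at the capacity limit, $\varepsilon_N$ depends on $R$ \emph{only} through the coverage probability. This is where care is needed: in the general form of Lemma 1 the density $\lambda_n$ carries the factor $\rho_s/\varepsilon = p_{active}$, which does depend on $R$ through the capacity. It is precisely the limiting condition $p_{active}=1$ from the proof of Proposition 3, combined with the homogeneous-band reduction $\Omega_n p_{n,v}/\sum_n \Omega_n p_{n,v} = 1/N$, that collapses the density to the $R$-independent value $\lambda_N = \lambda_u/(\lambda_{b,n}N)$. With $\lambda_N$ and $\Omega_N$ both fixed, $\varepsilon_N$ in (\ref{epilimI}) is a function of $R$ solely via $p_N$.

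Next I would compare the two coverage probabilities. Substituting $NR$ for $R$ in (\ref{plimI}) produces the exponent $NR/W_N = RN/W_N$, which is exactly the exponent appearing in (\ref{plimII}); hence $p_N^{II}(R) = p_N^I(NR)$ and therefore $\varepsilon_N^{II}(R) = \varepsilon_N^I(NR)$. Factoring the prefactor $R$ in $C^{\lim}_{II}$ then yields the scaling identity
\[
C^{\lim}_{II}(R) = R\left[1 - \left(1 - \varepsilon_N^{I}(NR)\right)^N\right] = \frac{1}{N}\,C^{\lim}_{I}(NR).
\]
Finally, maximizing over $R>0$ and using that $R\mapsto NR$ is a bijection of $(0,\infty)$ onto itself gives
\[
C^{\max}_{II} = \max_{R>0} \frac{1}{N}\,C^{\lim}_{I}(NR) = \frac{1}{N}\max_{R'>0} C^{\lim}_{I}(R') = \frac{C^{\max}_{I}}{N},
\]
as claimed, with the optimal Case II rate equal to $1/N$ times the optimizer obtained from Corollary 1.

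The main obstacle is the verification in the second step rather than any computation: the whole argument collapses to a clean algebraic rescaling only because $\lambda_N$ loses its $R$-dependence at the limit. If one worked away from the capacity limit, $p_{active}$ would couple the rate back into $\lambda_N$ and the simple $1/N$ scaling would fail. Once the $R$-independence of $\lambda_N$ is pinned down, the identity is exact and requires no approximation.
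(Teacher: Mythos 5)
Your proof is correct and follows essentially the same route as the paper: both rest on the rescaling identity $C^{\lim}_{II}(R) = C^{\lim}_{I}(NR)/N$ (via $p_N^{II}(R)=p_N^{I}(NR)$) followed by the observation that substituting $R\mapsto NR$ does not change the maximum over $R$. The only difference is that you explicitly verify the point the paper takes for granted --- that $\lambda_N$ is $R$-independent at the capacity limit, so the substitution $p_N^{I}\to p_N^{II}$ really is equivalent to rescaling $R$ --- which is a worthwhile clarification but not a different argument.
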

\begin{proof}
According to Propositions 3 and 4, we can write $C^{\lim}_{II}(R) = C^{\lim}_{I}(R N)/N$. Further considering the fact that adding a scaling on $R$ will not change the maximum value of $C^{\lim}_{I}$, i.e., $\max\limits_{R} C^{\lim}_{I}(R) = \max\limits_{R} C^{\lim}_{I}(R N) = C^{\max}_{I}$, Proposition 5 can be proved.
\end{proof}

\section{Numerical results and discussions}

This section presents numerical results and discusses their implications. First, we aim to understand the impacts of various parameters on the capacity-delay tradeoff (Figs.~6 to 10). Second, we want to investigate how the fundamental capacity limit scales with the number of bands $N$ and user-BS density ratio (Figs.~11 to 13). For illustration purpose, we consider an interference-limited system and homogeneous bands with $W_N=1$ and $\Omega_N =1$.

\subsection{Capacity-delay tradeoff}
Due to page limits, we restrict our discussions to the mean delay and the case of fixed bandwidth per band. Except when otherwise mentioned, the default parameter values are set to be $N=5$, $\lambda_u/\lambda_b = 50$, $\bar{L} = 10$, and $\bar{\alpha}_o = 10$. Moreover, the distributions of $L$ and $\alpha_o$ are treated as exponential. Therefore, our subsequent discussions are primarily based on Eqn. (\ref{Dexp}).

Fig.~6 shows the mean delay $\bar{D}$ as a function of $R$ with varying $N$ while the capacity is fixed to $C=1$ bits/s. U-shape curves are observed, indicating that given other parameters, there is an optimal value for $R$ to minimize the mean delay. Because we are interested in the fundamental capacity-delay tradeoff, it is desirable to consider the minimized delay over feasible values of $R$. Define $\bar{D}_{\min} = \min\limits_{R}(\bar{D})$, we will subsequently evaluate $\bar{D}_{\min}$ as a function of $C$. The value of $\bar{D}_{\min}$ is obtained by performing a numerical optimization over $R$.

Fig.~7 shows the impact of $\lambda_u/\lambda_b$ on the capacity-delay tradeoff curve. Two interesting phenomena are observed. First, when the user-BS density ratio is relatively high ($100 \leq \lambda_u/\lambda_b \leq 1000$), the capacity per user (at a fixed delay) appears to scale linearly with $\lambda_b/\lambda_u$. We called this ``infrastructure-limited'' regime, in which the investment in BS infrastructure yields linear returns on the capacity. In contrast, when the user-BS density is relatively low ($10 \leq \lambda_u/\lambda_b \leq 100$), investment in BS infrastructure only yields sub-linear returns.
Second, in the low delay regime, there is minimum delay even when $C$ approaches zero. Such a minimum delay is caused by coverage outage and primary traffic interruption, which caps the secondary service probability.

\begin{figure}[p]
\centerline{\includegraphics[width=11cm,draft=false]{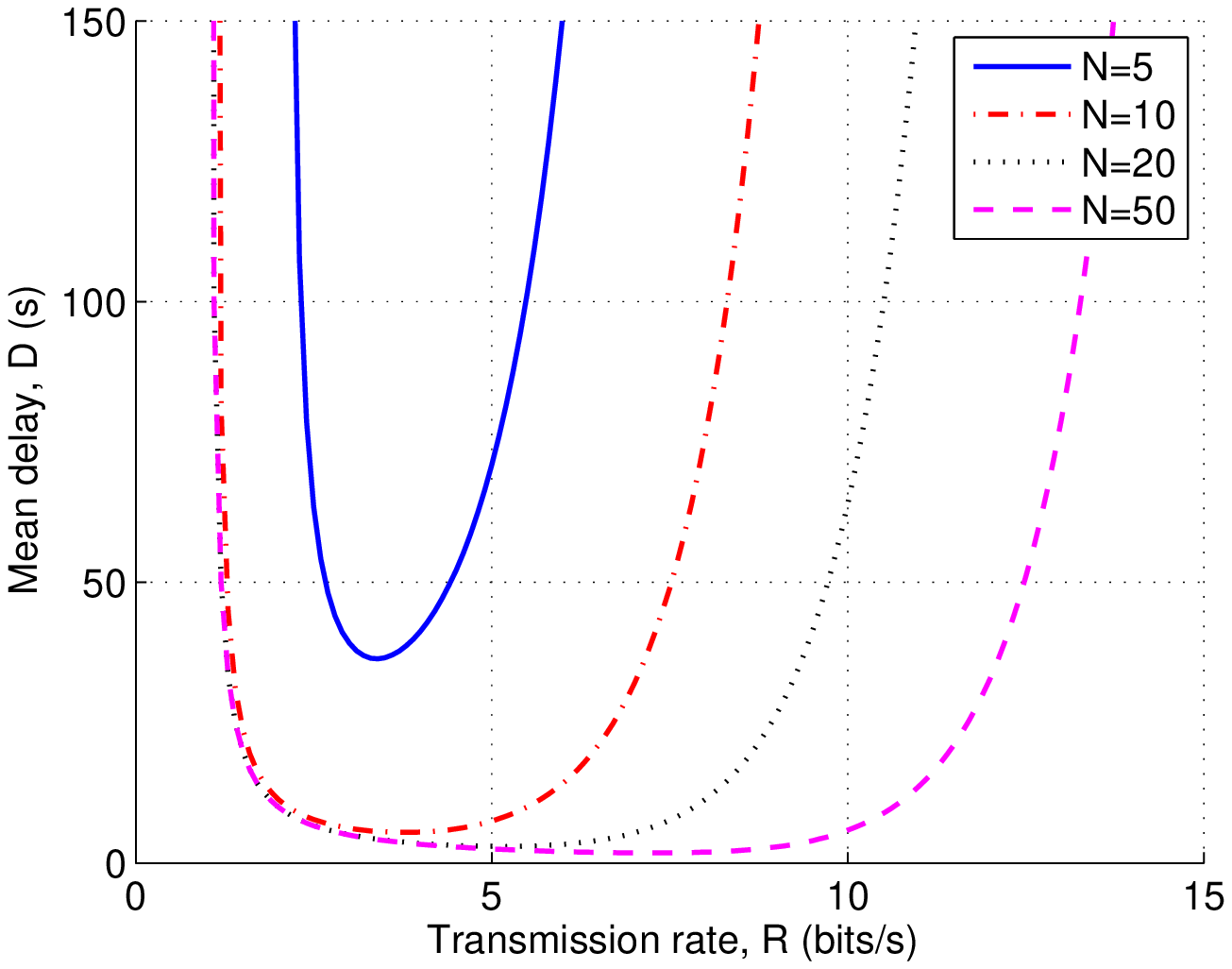}}
\caption{\small Mean delay $\bar{D}$ as a function of $R$ with varying $N$ (fixed bandwidth per band, $C$=1 bit/s/Hz, $\lambda_u/\lambda_b$=50, $\bar{L}$=10, $\bar{\alpha}_o$=10).}
\end{figure}

\begin{figure}[p]
\centerline{\includegraphics[width=11cm,draft=false]{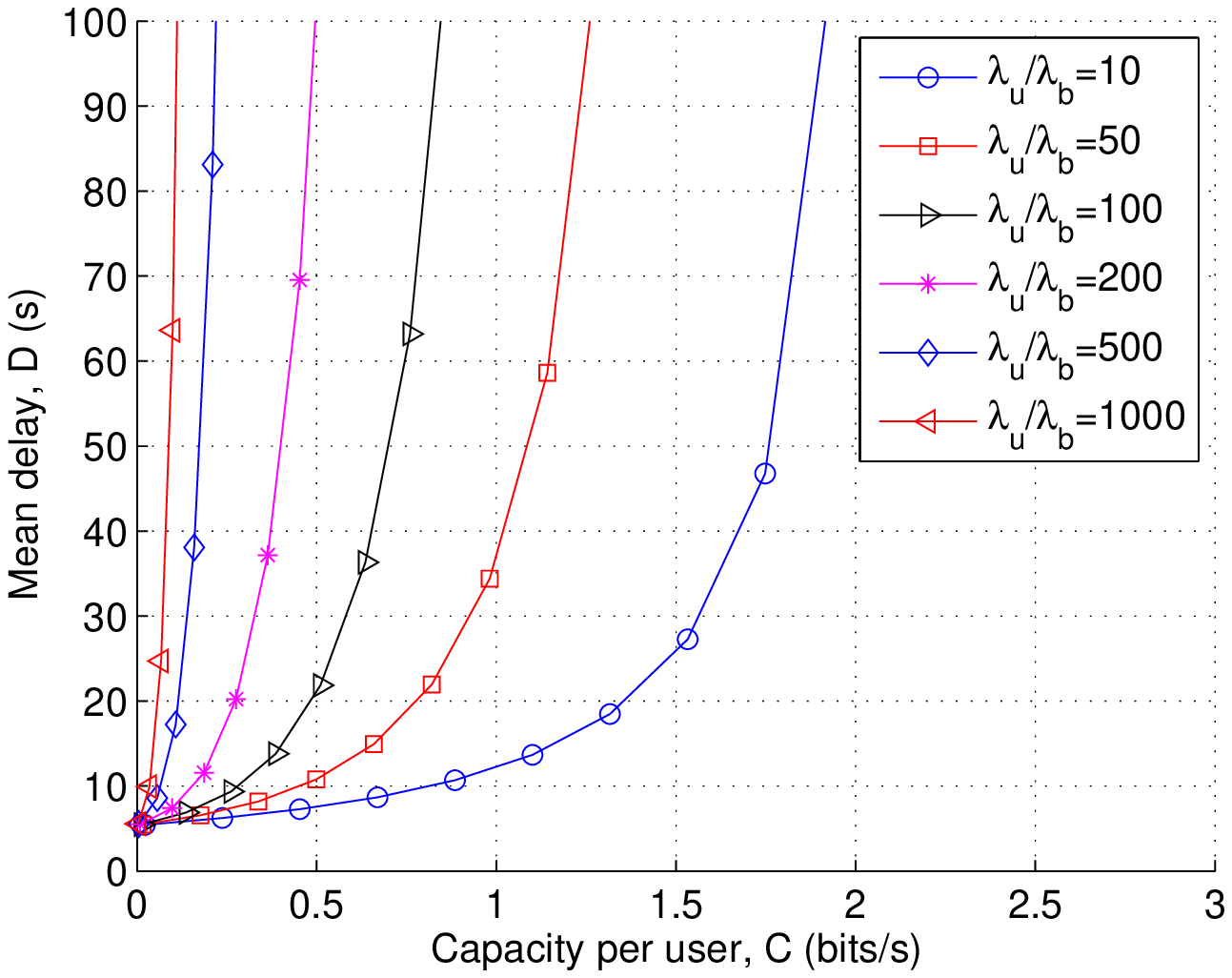}}
\caption{\small Mean delay $\bar{D}$ as a function of per user capacity $C$ with varying $\lambda_u/\lambda_b$ (fixed bandwidth per band, $N$=5, $\bar{L}$=10, $\bar{\alpha}_o$=10).}
\end{figure}

Fig.~8 shows the impact of the number of channels $N$ on the capacity-delay tradeoff curve. The capacity limits with respect to different values of $N$ are also shown. The delays are shown to rise quickly when $C$ approaches the capacity limits. It is observed that in the medium to high delay regime, capacity at a fixed delay scales linearly with $N$. In the low delay regime, increasing $N$ contributes slightly to reducing the minimum delay. Fig.~8 indicates that spectrum aggregation is effective for both capacity enhancement and delay reduction.

Fig.~9 shows the impact of average file size $\bar{L}$ on the capacity-delay tradeoff curve. The capacity limit is also shown, which is unrelated to the value of $\bar{L}$. In the low to medium capacity regime, $\bar{L}$ is shown to have a significant effect on the delay. A smaller value of $\bar{L}$ leads to a smaller delay because the file transmission has a lower probability of being interrupted by an outage. In the high delay regime, the impact of $\bar{L}$ diminishes as all delay curves eventually converge to the capacity limit. Fig.~9 suggests that file/session size management is an important factor to consider if a system is designed for low delay performance.

Fig.~10 shows the impact of mean outage arrival interval $\bar{\alpha}_o$ on the capacity-delay tradeoff curve. The capacity limit, which is independent from the values of $\bar{\alpha}_o$, is also shown. In the low delay regime, the curves converge to a minimum delay. In the high delay regime, we can predict that the curves also slowly converge to the capacity limit. However, significant differences are observed in the low to medium delay regimes. A smaller value of $\bar{\alpha}_o$ leads to smaller delays. This is because an interrupted session is less likely to be prolonged for a long period. Fig.~10 implies that introducing extra dynamics into the system (such as dynamic scheduling) can potentially help to reduce the delay.

\begin{figure}[p]
\centerline{\includegraphics[width=11cm,draft=false]{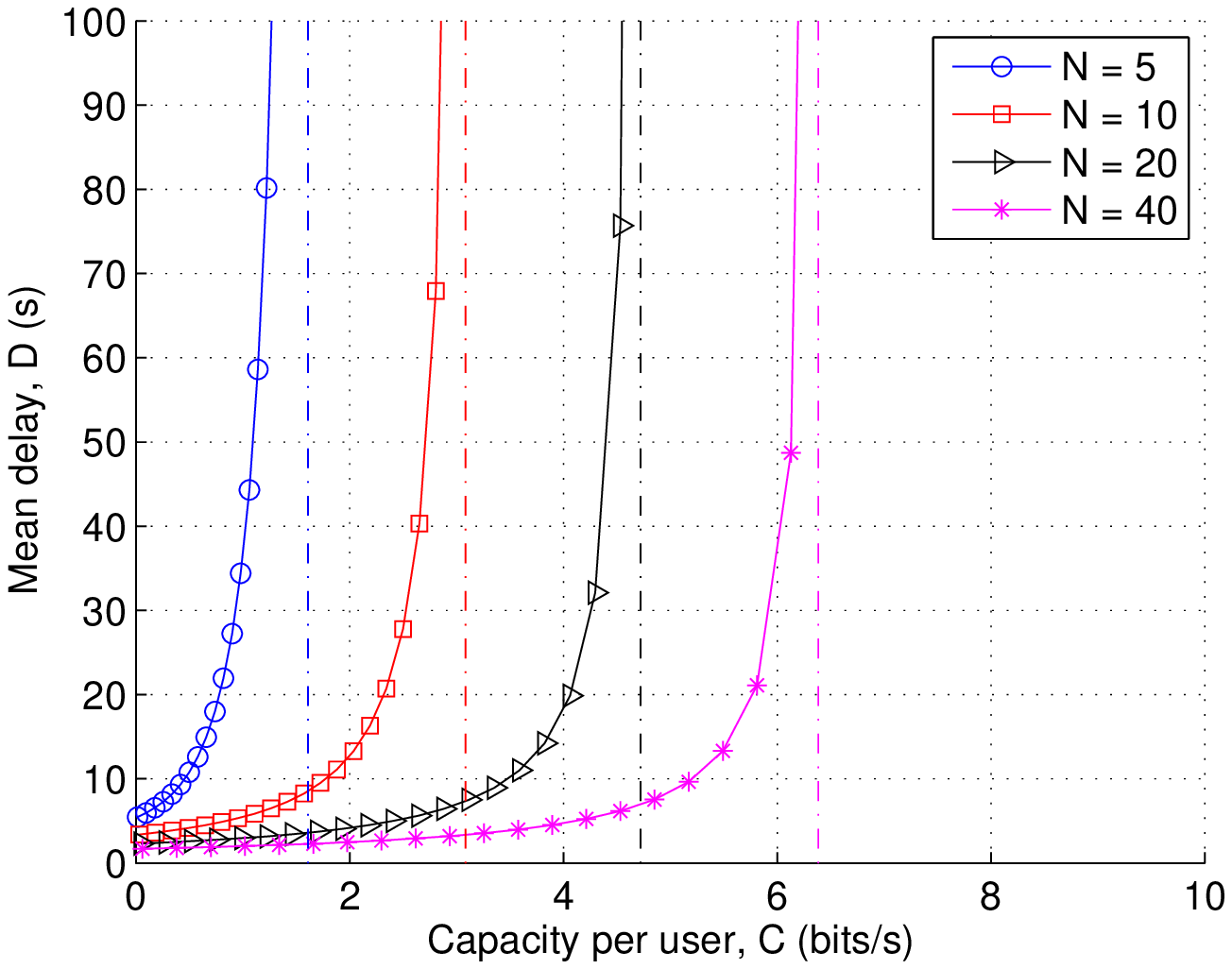}}
\caption{\small Mean delay $\bar{D}$ as a function of per user capacity $C$  with varying $N$ (fixed bandwidth per band, $\lambda_u/\lambda_b$=50, $\bar{L}$=10, $\bar{\alpha}_o$=10).}
\end{figure}
\begin{figure}[p]
\centerline{\includegraphics[width=11cm,draft=false]{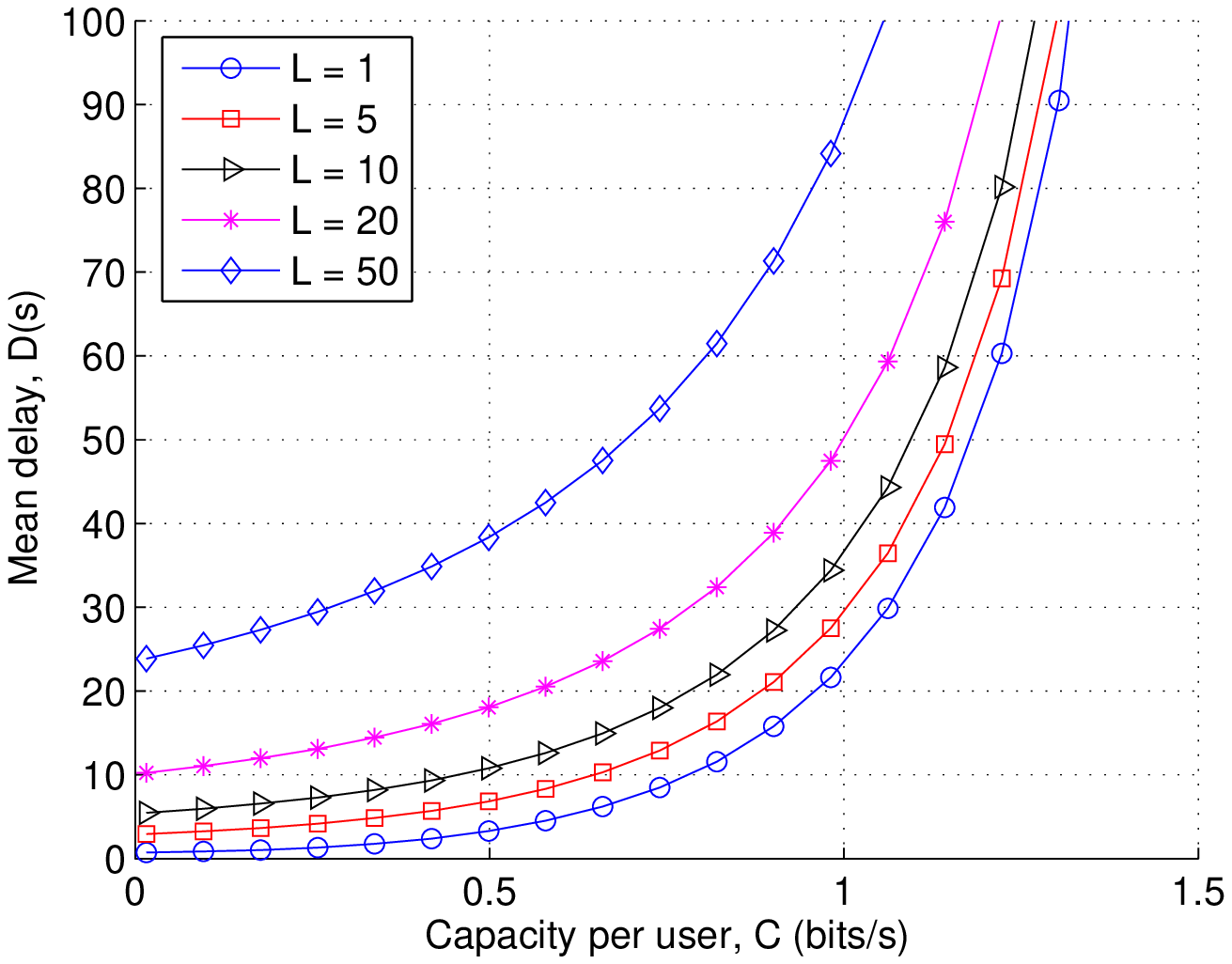}}
\caption{\small Mean delay $\bar{D}$ as a function of per user capacity $C$  with varying $\bar{L}$ (fixed bandwidth per band, $\lambda_u/\lambda_b$=50, $N$=5, $\bar{\alpha}_o$=10).}
\end{figure}

\begin{figure}[p]
\centerline{\includegraphics[width=11cm,draft=false]{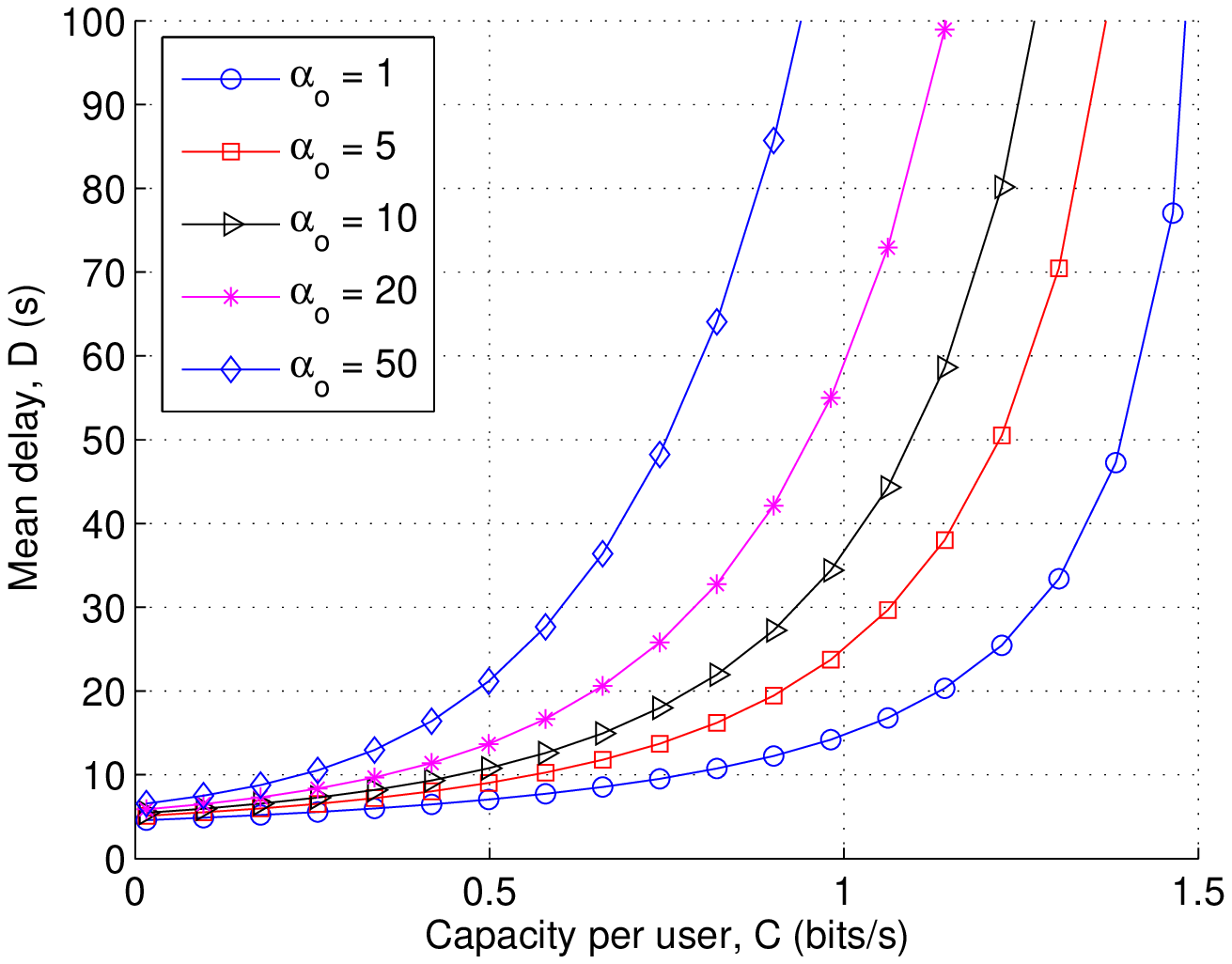}}
\caption{\small Mean delay $\bar{D}$ as a function of per user capacity $C$ with varying $\bar{\alpha}_o$ (fixed bandwidth per band, $\lambda_u/\lambda_b$=50, $N$=5, $\bar{L}$=10).}
\end{figure}

\subsection{Capacity limit and scaling}
This subsection investigates how the capacity limit scales with $N$ and user-BS density ratio.  Consider the case of fixed bandwidth per band, Fig.~11 applies Corollary 1 to show the maximum capacity $C^{\max}_I$ as a function of $N$ with varying $\lambda_u/\lambda_b$. We see that the capacity increases monotonically with increasing $N$, indicating the benefits of spectrum aggregation. However, increasing $N$ shows diminishing returns on the capacity gain. This differs from the intuition that system capacity scales linearly with bandwidth (i.e., the number of bands). It is also interesting to observe that the curves with different values of $\lambda_u/\lambda_b$ converge to the same value when $N$ tends large. This is because we assume that a user is allowed to access only one band. When $N$ is large, the capacity is limited by the bandwidth per band rather than the number of bands. Fig.~11 suggests that to achieve the full potential of spectrum aggregation, it is important to allow users to access multiple bands simultaneously.


Considering the case of fixed system bandwidth, Fig.~12 applies Proposition 5 to show the maximum capacity $C^{\max}_{II}$ as a function of $N$ with varying $\lambda_u/\lambda_b$. It is shown that with increasing $N$, the capacity increases initially but eventually declines. For each value of $\lambda_u/\lambda_b$, there exists an optimal value of $N$ to maximize the capacity. Fig.~12 reveals a design tradeoff between maximizing single channel capacity and maximizing multi-user access probability. When $N$ increases, the single channel capacity decreases due to reduced bandwidth, while the access probability increases due to increased number of bands. Fig.~12 implies that proper channelization of the available spectrum resource is important, particularly when $\lambda_u/\lambda_b$ is small.

By performing a numerical search for the optimal value of $N$ based on results in Fig.~12, Fig.~13 shows the corresponding maximum values of $C^{\max}_{II}$ as a function of $\lambda_u/\lambda_b$. We find that there exists a convenient approximation given by
\begin{equation}
C^{* \max}_{II} \approx 0.6359-0.052\log_2(\lambda_u/\lambda_b). \label{ciiappx}
\end{equation}
The actual values obtained from numerical calculation and the approximated values obtained from (\ref{ciiappx}) are compared in Fig.~13. It is shown that our approximation is reasonably accurate for $2 < \lambda_u/\lambda_b < 500$. Fig.~13 shows that per user throughput is upper bounded by a constant and reduces at a sub-linear rate with increasing $\lambda_u/\lambda_b$.

\begin{figure}[p]
\vspace{0.1in}
\centerline{\includegraphics[width=11cm,draft=false]{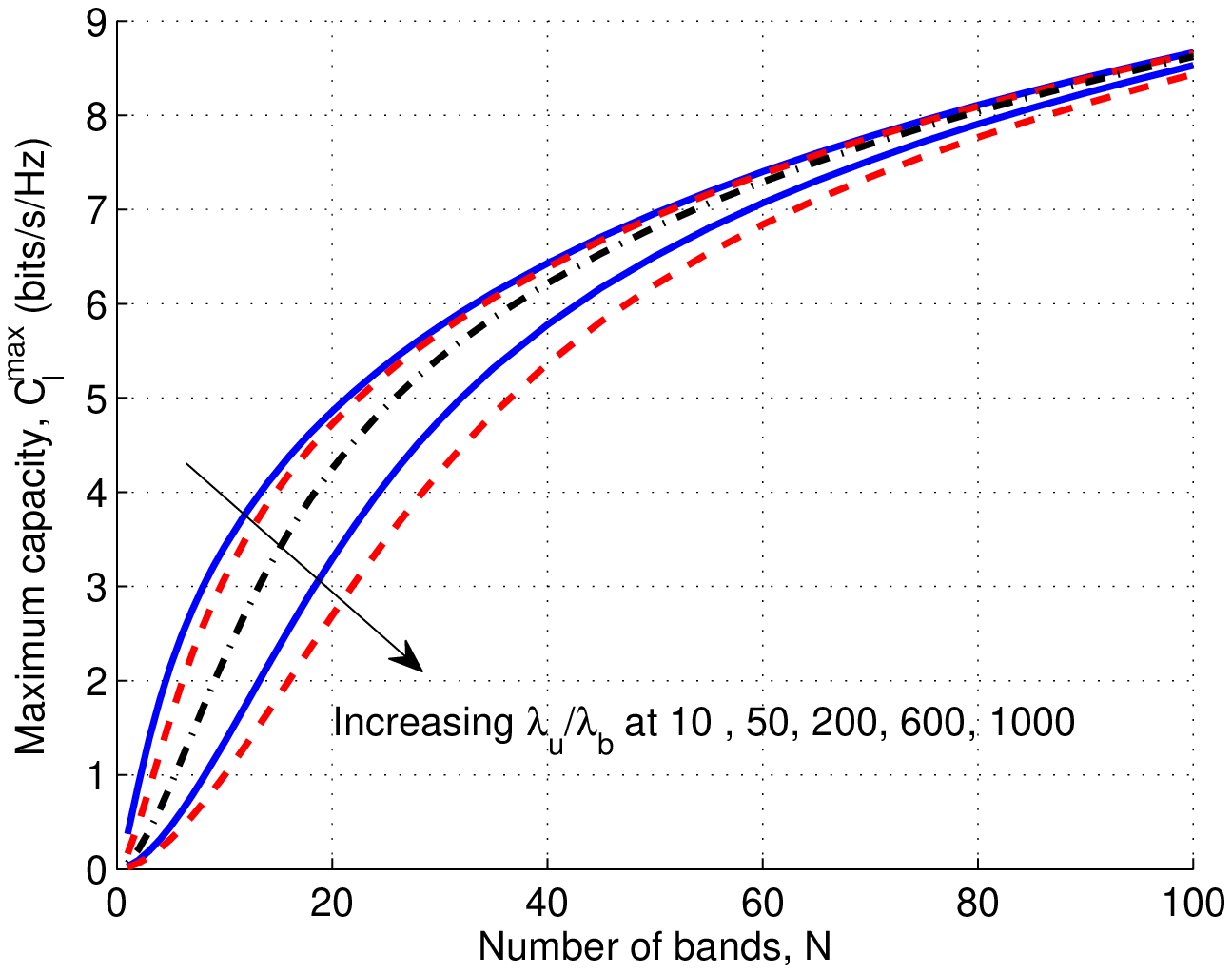}}
\caption{\small Maximum capacity $C^{\max}_I$ as a function of $N$ with varying $\lambda_u/\lambda_b$ (fixed bandwidth per band).}
\end{figure}

\begin{figure}[p]
\vspace{0.125in}
\centerline{\includegraphics[width=11cm,draft=false]{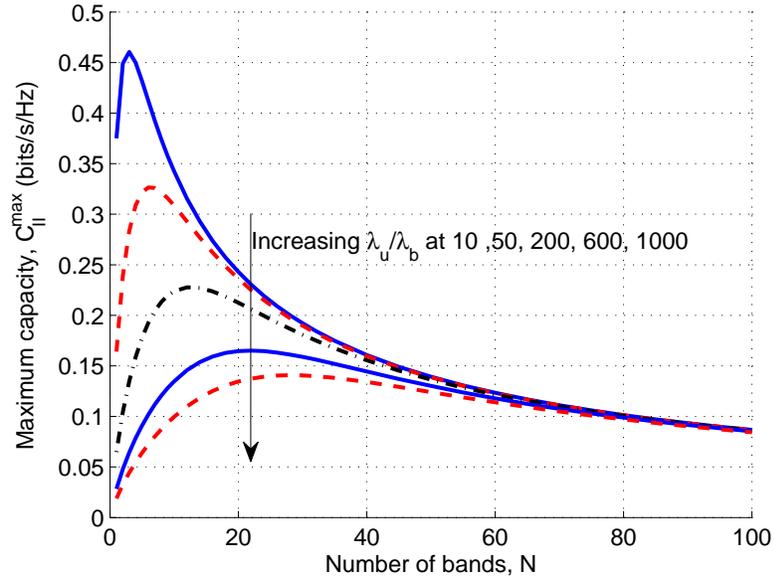}}
\caption{\small Maximum capacity $C^{\max}_{II}$ as a function of $N$ with varying $\lambda_u/\lambda_b$ (fixed system bandwidth).}
\end{figure}

\begin{figure}[p]
\vspace{0.125in}
\centerline{\includegraphics[width=11cm,draft=false]{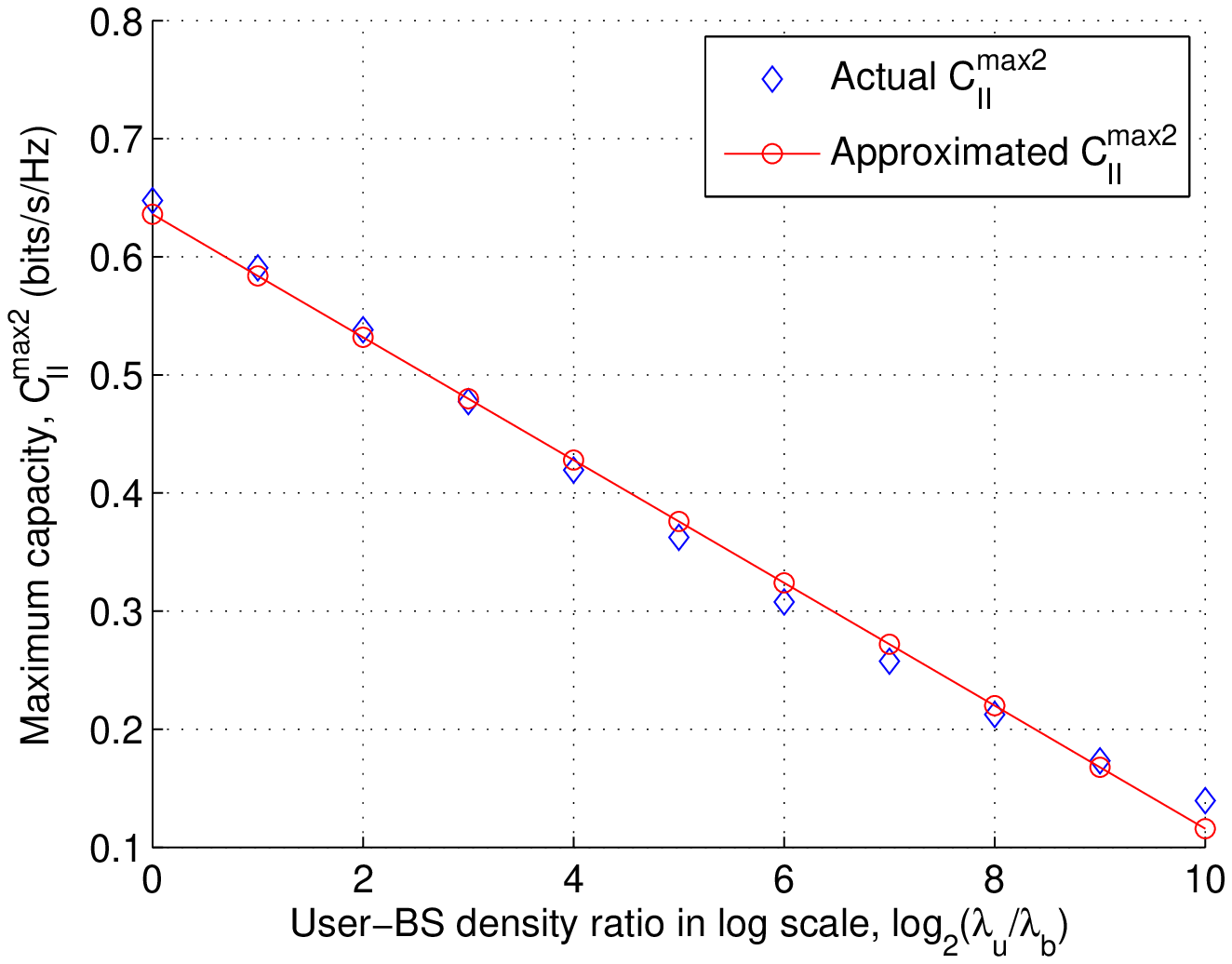}}
\caption{\small Approximation of $C^{\max}_{II}$ as a function of user-BS density ratio (fixed system bandwidth).}
\end{figure}

\section{Conclusions}
An analytical framework has been proposed for the study of the capacity-delay tradeoff in cellular networks with spectrum aggregation. The framework compliments existing ones by focusing on the secondary traffic and offering tractable analytical insights. Analytical results have been derived to characterize the capacity-delay tradeoff and the fundamental capacity limit. Numerical studies have shown that while spectrum aggregation primarily affects the capacity in the high-delay regime, session size management and dynamic scheduling have bigger impacts on the capacity in the low delay regime. Moreover, when different bands have homogeneous configurations, it has been shown that the per user throughput per Hertz is upper bounded by a constant and reduces at a rate proportional to the logarithm of user-BS density ratio. Our analysis offers useful guidelines for providing novel secondary services over cellular networks to improve the overall capacity utilization.

\appendices
\section{}
This appendix gives the proof for Lemma 1. Because we assume that an active user randomly selects a band for access, in an equilibrium state, the density of users in a band is proportional to the area fraction of coverage of this band. The density of active users in the $n$th band is then given by
\begin{equation}
\lambda_{u,n} = \lambda_u \cdot p_{active} \cdot \frac{\Omega_n p_{n,v}}{\sum_{n=1}^N \Omega_n p_{n,v}} = \frac{\lambda_u \rho_s}{\varepsilon} \cdot \frac{\Omega_n p_{n,v}}{\sum_{n=1}^N \Omega_n p_{n,v}}.
\end{equation}
 Now consider an active user in band $n$, the number of contenting users in the same cell can be evaluated according to (\ref{fK}) with user density $\lambda_{u,n}$ and BS density $\lambda_{b,n}$. When strict fairness is assumed, the access probability of a user is given by
\begin{align}\label{pna}
 p_{n,a} =& \sum_{k=0}^{\infty} \frac{1}{k+1} f_K(k) \\  \nonumber
         =& \sum_{k=0}^{\infty} \frac{1}{k+1} \int_{k=0}^{\infty} \frac{(\lambda_n \Lambda p_{n,v} x)^k}{k!} e^{-\lambda_n \Lambda p_{n,v} x} f_U(x) dx \\ \nonumber
         =& \sum_{k=0}^{\infty} \frac{1}{\lambda_n \Lambda p_{n,v} x} \left[ \int_{k=1}^{\infty} \frac{(\lambda_n \Lambda p_{n,v} x)^k}{k!} \right] e^{-\lambda_n \Lambda p_{n,v} x} f_U(x) dx \\ \nonumber
         =& \sum_{k=0}^{\infty} \frac{1}{\lambda_n \Lambda p_{n,v} x} \left( 1- e^{-\lambda_n \Lambda p_{n,v} x}\right) f_U(x) dx \\ \nonumber
         =& \frac{3.5^{4.5}}{\Gamma(4.5)} \frac{1}{\lambda_n \Lambda p_{n,v}} \left[ \int_0^{\infty} x^{2.5} e^{-3.5x} dx - \int_0^{\infty} x^{2.5} e^{(-3.5 + \lambda_n \Lambda p_{n,v})x} dx   \right] \\ \nonumber
         = & \frac{3.5^{4.5}}{\Gamma(4.5)} \frac{1}{\lambda_n \Lambda p_{n,v}}\left[ \frac{\Gamma(3.5)}{3.5^{3.5}} - \frac{\Gamma(3.5)}{(3.5 + \lambda_n \Lambda p_{n,v})^{3.5}}   \right] \\ \nonumber
         =& \frac{1}{\lambda_n \Lambda p_{n,v}} \left[ 1- \left( 1 + \frac{\Lambda p_{n,v} \lambda_n}{3.5} \right)^{-3.5} \right].
\end{align}
Finally, Lemma 1 can be obtained by substituting (\ref{pna}) into (\ref{epin}).


\end{document}